\documentclass[a4paper,12pt]{article}
\usepackage{amsmath,amsthm,amsfonts,mathtools}
\usepackage{amssymb}
\usepackage{hyperref}
\usepackage[T1]{fontenc}
\usepackage{times}
\mathtoolsset{showonlyrefs=true}
\usepackage{paralist}
\usepackage{graphicx}

\setdefaultenum{i)}{(a)}{}{}
\usepackage[round]{natbib}

\usepackage[total={6in,10in},left=1.5in,top=0.5in,includehead,includefoot]{geometry}
\usepackage[dvipsnames,table]{xcolor}
\usepackage{subfig}
\usepackage{listings}

\captionsetup[lstlisting]{margin=0cm,format=hang,font=small,format=plain,labelfont={bf,up},textfont={it}}

\definecolor{gris245}{RGB}{245,245,245}
\definecolor{olive}{RGB}{50,140,50}
\definecolor{brun}{RGB}{175,100,80}

\numberwithin{equation}{section}
\usepackage{enumerate} 
\usepackage{color}
\usepackage{endnotes}
\let\footnote=\endnote

\newtheorem{theorem}{Theorem}[section]

\newtheorem{definition}[theorem]{Definition}

\newtheorem{lemma}[theorem]{Lemma}
\newtheorem{proposition}[theorem]{Proposition}
\newtheorem{remark}[theorem]{Remark}

\numberwithin{equation}{section}

\newcommand{\nada}[1]{}

\DeclareMathOperator{\tracking}{\textup{TrE}}
\DeclareMathOperator{\avtrco}{\textup{ATC}}

\usepackage{csquotes}
\DeclareMathOperator{\tdiff}{\text{TrD}}

\long\def\symbolfootnote[#1]#2{\begingroup\def\thefootnote{\fnsymbol{footnote}}\footnote[#1]{#2}\endgroup}

\title{Asymptotic methods for transaction costs}
\author{Eberhard Mayerhofer\footnote{University of Limerick, Department of Mathematics \& Statistics, V94 T9PX, Limerick, Ireland, \texttt{eberhard.mayerhofer@ul.ie}.}}
\begin{document}
\maketitle

\abstract{
We propose a general approximation method for determining optimal trading strategies in markets with proportional transaction costs, with a polynomial approximation of the residual value function. The method is exemplified by several problems from optimally tracking benchmarks, hedging the Log contract, to maximizing utility from terminal wealth. Strategies are also approximated by practically executable, discrete trades. We identify the necessary trade-off between trading frequency and trade sizes to have satisfactory agreement with the theoretically optimal, continuous strategies of infinite activity.

\vspace{0.5cm}
\footnotesize\textsc{Mathematics Subject Classification (2010):} {91G10, 91G80}\\

\footnotesize\textsc{Keywords:} transaction costs; portfolio choice; shadow prices; reflected diffusions; asymptotics
\newpage
\section{Introduction}\label{sec: Intro}

\begin{quote}
{\it It was explained to me that the effects on which one was working were so vanishingly small that without the greatest possible precision in computation they might have been missed altogether.}
\flushright{Norbert Wiener\footnote{ Norbert Wiener about the Manhattan project, in ``I am a mathematician'', MIT Press 1964.}}
\end{quote}

Optimal trading strategies in a Black-Scholes market or its extensions face challenges when applied to realistic markets with frictions, including transaction costs, price impact, or margin requirements. For instance, employing a constant proportion trading strategy or a standard Delta-hedge in the Black-Scholes model becomes practically infeasible in the presence of a relative bid-ask spread $\varepsilon$, as it may lead to immediate bankruptcy due to trading policies of infinite variation. In order to maintain solvency in markets with proportional transaction costs, trading frequency is adjusted to finite variation, involving minimal trades necessary to stay close to the target exposures. The deviation from this target exposure is of order $\varepsilon^{1/3}$, while the impact on transaction costs is of second order (that is, of magnitude $\varepsilon^{2/3}$, cf.~\cite{MR1284980,gerhold.al.11,kallsenmad,rogers}).
Determining long-run optimal trading strategies or solutions to infinite horizon problems involves solving specific free boundary problems. These problems define boundaries that limit the maximum deviation of a portfolio statistic $\xi_t$ from the target, where these statistics commonly represent the proportion of wealth in the risky asset, the risky-safe ratio, or transformed variants. For constant investment opportunities, an investor refrains from trading when $\xi_t$ is outside a constant region $[\xi_-,\xi_+]$, engaging in minimal trading only at the boundaries $\xi_\pm$ to maintain the statistic within the interval. This strategy is embodied by a diffusion with two reflecting boundaries and is referred to as a control limit policy (\cite{taksar1988diffusion}). The parameterization of the involved equations and the methods to solve them are customized for each specific setup. However, the analytical methods are highly technical, making them suitable for an expert audience. Moreover, the resulting optimal trading policies are reflected diffusions (cf.~\cite{tanaka}), making them impractical for implementation as they entail infinite activity.

The aim of this paper is to address these challenges by introducing asymptotic methods, both for approximating the value functions in various problems and designing asymptotically optimal trading policies that can be executed in real trading environment. Asymptotic methods have proved instrumental in
understanding sensitivity of trading boundaries and welfare to different market and risk-aversion parameters (cf.~ the ingenious insights of \cite{constantinides} based on numerical computation, with the financial interpretation of the asymptotic expansions derived in \cite{kallsenmad, gerhold2012asymptotics,gerhold.al.11}.

First, we employ finite activity strategies to approximate control limit policies. In contrast to reflected diffusions, these strategies can be implemented in a real trading environment. They involve trades of fixed size aimed at re-aligning a portfolio statistic with its frictionless target, such as adjusting the proportion of wealth $\pi_t$ in the risky asset when reaching $\pi_\pm$. Acknowledging the insightful contribution of \cite[Appendix B]{MR1284980}, which highlights that no-trade regions typically scale with size proportional to $\varepsilon^{1/3}$, discrete trades in this context must be of the same or higher order as the bid-ask spread. The findings of this paper offers an alternative answer to the question of \cite{rogers} regarding the $\varepsilon^{2/3}$ effect of proportional transaction costs on welfare. Approximating the control limit policy with trades of fixed size reveals that transaction costs, at the leading order, are a product of the bid-ask spread ($\varepsilon$), trade size ($\varepsilon^{\alpha/3}$ for $\alpha\geq 1$), and trading frequency (proportional to $\varepsilon^{-\frac{1}{3}(\alpha+1)}$), resulting in the same cost of order $\varepsilon^{2/3}$, irrespective of the smallness of trades (dictated by the parameter $\alpha$) or the objective under consideration.

Second, we put forward a general approximation method for pertinent free boundary problems, which often\footnote{For the Log Contract we use another parameterisation.} can be expressed as
\begin{align}\label{eq: ODE2}
& F(g,g',g'')=0,\\\label{eq: shadow constraints}
g(\pi_-)&=1,\quad g(\pi_+)=1-\varepsilon,\\\label{eq: smooth pasting}
g'(\pi_-)&=0, \quad g'(\pi_+)=0.
\end{align} 
Here, an unknown scalar $g=g(\pi)$ representing the proportion of wealth $\pi$ in the risky asset is involved. The function $g$ must satisfy a second-order ODE \eqref{eq: ODE2} between the buy and sell boundaries $\pi_-$ and $\pi_+$, and boundary conditions \eqref{eq: shadow constraints} and \eqref{eq: smooth pasting}. While the latter are universal, the ODE \eqref{eq: ODE2} identifies the specific objective under consideration. The approximation method relies on a polynomial expansion of the scalar function $g$, and an asymptotic Ansatz for the trading boundaries $\pi_\pm$.

The parameter $\varepsilon$ in the condition \eqref{eq: shadow constraints} represents the relative bid-ask spread of the risky asset $S$, where one purchases the asset at time $t$ at the ask price $S_t$ and sells it at the slightly lower price $(1-\varepsilon)S_t$ (bid). This boundary condition's specific form lends itself to the financial interpretation of a frictionless price $\widetilde S$ with a functional form expressed as $\widetilde S_t:=g(\pi_t)S_t$. This serves as a potential shadow price, that is fictitious risky asset evolving within the bid-ask spread $S_t(1-\varepsilon)\leq \widetilde S_t\leq S_t$ and which aligns precisely with the bid or ask of $S$ whenever the optimal strategy for $\widetilde S$ involves executing trades. This asset is indeed a shadow price for power- and log-utility, as noted in previous studies \cite{gerhold2013dual,gerhold.al.11,guasonimuhlekarbe}. However, for certain local criteria, it might only serve as an asymptotic shadow price (or none at all), as discussed in \cite{kallsenmad,em_1_2024}. Notwithstanding this issue, our method is applicable for equations both in the original markets, as in potential shadow markets.
A distinctive aspect of this paper lies in its examination of Neuberger's log contract \cite{neuberger} and its dynamic hedge under proportional transaction costs, supplementing the well-explored objectives with known solutions in the literature. Due to the straightforward form of its free boundary problem, our analysis may serve as a valuable introductory example for graduate students delving into the realm of optimization under proportional transaction costs.

\subsection*{Program of Paper}
The paper's structure is organized as follows: Section \ref{sec: the model} introduces the market model, encompassing a risky Black-Scholes asset with transaction costs, along with the presentation of relevant admissible strategies for this study. Section \ref{sec: Results} comprises three distinct parts:

Section \ref{sec: transaction costs} investigates three strategies and evaluates their performance in terms of trading frequency and long-run cost. This includes the examination of minimal trades (control limit policies, see Section \ref{sec: minimal trades}), maximal trades (discrete trades directly to the target, see Section \ref{sec: maximal trades}), and small trades (sufficiently small trades to closely align with control limit policies, see Section \ref{sec: micro trades}).
Section \ref{sec: tracking} is dedicated to optimization problems related to tracking specific targets. It includes the computation of asymptotic approximations for the free boundary problem for Leveraged ETFs (see Section \ref{sec: LETFS}) and optimally hedging a log-Contract (see Section \ref{sec: log contract}).
Section \ref{sec: risk} examines classical objectives with a long-run horizon, featuring the risk-averse investor (Section \ref{sec: power utility}), the log-utility investor (Section \ref{sec: log utility}), and the risk-neutral investor (Section \ref{sec: risk neutral}).
The interpretation and discussion of these results are presented in Section \ref{sec: Discussion}.

\section{Materials and Methods}\label{sec: the model}

In our financial market model, there are two tradable assets: a risk-free asset with continuous compounding 
at a rate $r \geq 0$ and a risky asset $S$ with ask price $S_t$, governed by the dynamics
\begin{equation}\label{eq:evolution}
\frac{dS_t}{S_t} = (\mu + r)dt + \sigma dB_t, \quad S_0, \sigma > 0, \quad \mu \geq 0,
\end{equation}
where $B$ denotes a standard Brownian motion. We introduce a relative bid-ask spread $\varepsilon > 0$, ensuring that the risky asset is sold at $(1 - \varepsilon)S_t$.

In markets with frictions, trading strategies need to be modified so to ensure solvency.
In the presence of transaction costs, solvency constraints dictate that only finite-variation trading strategies are viable.\footnote{Discrete trades occur in the context of fixed transaction costs \cite{alta}, where only a fixed number of trades maintains solvency. Finite trading speed -that is absolute continuity of the number of shares of the risky asset- occurs in market impact models, cf. \cite{guasoniweber} and the references therein. See also \cite{annals}.} Let $X_t$ be the safe position and $Y_t$ the risky position. Furthermore, denote by $(\varphi_t^\uparrow)_{t\ge 0}$ and $(\varphi_t^\downarrow)_{t\ge 0}$ the cumulative number of shares bought and sold, respectively. The total wealth, assessed at the ask, is given by $w_t=X_t+Y_t=X_t+\varphi_t S_t$, where $\varphi_t=\varphi_t^{\uparrow}-\varphi_t^{\downarrow}$.

In this paper, we study two types of trading strategies: For continuous $\varphi_t$, the self-financing condition reads
\begin{align}\label{eq: sf1}
X_t &= X_0+\int_0^t rX_u du-\int_0^t S_u d\varphi^\uparrow_u+(1-\varepsilon)\int_0^t S_u d\varphi^\downarrow_u,\\\nonumber
Y_t &=Y_0+ \int_0^t S_u d\varphi^\uparrow_u- \int_0^t S_u d \varphi^\downarrow_u+\int_0^t \varphi_u dS_u.
\end{align}
If $\varphi_t$ is a finite activity jumps process of pure jump type, such that
\[
\varphi_t^\uparrow=\sum_{s\leq t}\Delta \varphi^\uparrow_s,\quad \varphi_t^\downarrow=\sum_{s\leq t} \Delta\varphi^\downarrow_s,
\]
then the self-financing condition reads
\begin{align}\label{eq: sf2}
X_t &= X_0+\int_0^t rX_u du-\sum_{u\leq t} S_u \Delta \varphi^{\uparrow }_t+(1-\varepsilon)\sum_{u\leq t} S_u \Delta \varphi^\downarrow_u,\\\nonumber
Y_t &=Y_0+ \sum_{u\leq t} S_u \Delta \varphi^\uparrow_u-\sum_{u\leq t} S_u \Delta \varphi^\downarrow_u+\int_0^t \varphi_{u_-} dS_u.
\end{align}
The following defines admissible strategies:

\begin{definition}\label{def: admiss}
The trading policy $(x=X_0+\varphi_0 Y_0, \varphi_t=\varphi_t^\uparrow-\varphi_t^\downarrow)$ is \emph{admissible} if
\begin{enumerate}
\item $(\varphi_t^\uparrow)_{t\ge 0}$ and $(\varphi_t^\downarrow)_{t\ge 0}$ are right-continuous, increasing, and non-anticipating processes.
\item It is self-financing, satisfying \eqref{eq: sf1} resp. \eqref{eq: sf2}.
\end{enumerate}
The collection of admissible trading policies is denoted by $\Phi$.
\end{definition}

Depending on the specific optimization criterion, some extra integrability or solvency conditions may apply. (Since we focus on asymptotic methods for solving a range of free boundary problems, we refrain from formulating them explicitly.)

The next statement elaborates the dynamics of wealth, its fraction in the risky asset, and the ratio of risky and safe investment. 
\begin{lemma}\label{le: rewriting obj fun}
For any admissible, continuous trading strategy $\varphi$, we have
\begin{align}\label{eq zeta diff}
\frac{d\zeta_t}{\zeta_t}&=\mu dt +\sigma dB_t+(1+\zeta_t)\frac{d\varphi_t^\uparrow}{\varphi_t} -(1+(1-\varepsilon)\zeta_t)\frac{d\varphi_t^\downarrow}{\varphi_t},\\\label{eq w diff}
\frac{dw_t}{w_t}&=r dt+\pi_t(\mu dt+\sigma dB_t-\varepsilon \frac{d\varphi_t^\downarrow}{\varphi_t}),\\\label{eq pi diff}
\frac{d\pi_t}{\pi_t}&=(1-\pi_t)(\mu dt+\sigma dB_t)-\pi_t(1-\pi_t)\sigma^2 dt+\frac{d\varphi_t^\uparrow}{\varphi_t}-(1-\varepsilon\pi_t)\frac{d\varphi_t^\downarrow}{\varphi_t}.
\end{align}
If $\varphi_t$ is of pure jump type (cf.~\eqref{eq: sf2}, the same statistics satisfy similar equations, perhaps most rigorously stated in integral form,
\begin{align}\label{eq zeta diff1}
\zeta_t&=\zeta_0+\int_0^t\zeta_{s_-}\left(\mu ds +\sigma dB_s\right)+\sum_{s\leq t}\zeta_{s_-}(1+\zeta_{s_-})\frac{\Delta \varphi_s^\uparrow}{\varphi_{s_-}} \\\nonumber\qquad\qquad &-\sum_{s\leq t}\zeta_{s_-}(1+(1-\varepsilon)\zeta_{s_-})\frac{\Delta\varphi_{s}^\downarrow}{\varphi_{s_-}},\\\label{eq w diff1}
w_t&=w_0+r\int_0^t w_{s_-} ds+\int_0^t w_{s_-}\pi_{s_-}(\mu ds+\sigma dB_s)-\sum_{s\leq t}\varepsilon w_{s_-} \frac{\Delta \varphi_s^\downarrow}{\varphi_{s_-}},\\\label{eq pi diff1}
\pi_t&=\pi_0+\int_0^t\pi_{s_-}(1-\pi_{s_-})(\mu ds+\sigma dB_s)-\int_0^t\pi_{s_-}^2(1-\pi_{s_-})\sigma^2 ds\\\nonumber\qquad\qquad\qquad&+\sum_{s\leq t}\pi_{s_-}\frac{\Delta \varphi_s^\uparrow}{\varphi_{s_-}}-\sum_{s\leq t}\pi_{s_-}(1-\varepsilon\pi_{s_-})\frac{\Delta\varphi_s^\downarrow}{\varphi_{s_-}}.
\end{align}
\end{lemma}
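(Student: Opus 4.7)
The strategy is to derive the three equations from the defining SDEs \eqref{eq: sf1} (continuous case) or \eqref{eq: sf2} (jump case) together with $Y_t=\varphi_t S_t$, $w_t=X_t+Y_t$, $\zeta_t=Y_t/X_t$, and $\pi_t=Y_t/w_t$, by repeated application of It\^o's formula. The key observation that simplifies everything is that the processes $\varphi^\uparrow$ and $\varphi^\downarrow$ have finite variation, so no quadratic-variation cross terms between them and the Brownian driver $B$ arise, and the quotient rule reduces to its classical form on intervals where $\varphi$ is continuous.

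For the continuous case, I would proceed in three steps. First, compute $dw_t=dX_t+dY_t$ directly from \eqref{eq: sf1}: the buy/sell contributions from $X_t$ and $Y_t$ cancel except for the friction term $-\varepsilon S_t\,d\varphi_t^\downarrow$, and the remaining drift-diffusion part $\varphi_t\,dS_t=Y_t\,dS_t/S_t$ yields, after dividing by $w_t$ and using $\pi_t=Y_t/w_t$ together with $S_t\,d\varphi_t^\downarrow/w_t=\pi_t\,d\varphi_t^\downarrow/\varphi_t$, exactly \eqref{eq w diff}. Second, apply the quotient rule to $\zeta_t=Y_t/X_t$: since $X_t$ is of finite variation, $d\zeta_t=dY_t/X_t-(Y_t/X_t^2)\,dX_t$; substituting the SDEs, grouping $d\varphi^\uparrow$ and $d\varphi^\downarrow$ coefficients, and repeatedly using $S_t/X_t=\zeta_t/\varphi_t$ together with $\varphi_t\,dS_t/X_t=\zeta_t\,dS_t/S_t$ produces \eqref{eq zeta diff}. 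Third, apply It\^o's formula to $\pi_t=Y_t/w_t$ with $w_t$ now a genuine semimartingale; the Brownian cross-variation term $-\pi_t^2(1-\pi_t)\sigma^2\,dt$ arises precisely from $-d\langle Y,w\rangle/w_t^2+Y_t\,d\langle w\rangle/w_t^3$, and the remaining bookkeeping reproduces \eqref{eq pi diff}.

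For the pure-jump case the outline is identical, but with It\^o's formula in jump form: between jumps of $\varphi$ the statistics evolve purely by the Black--Scholes SDE of $S$, while at each jump time $s$ one reads off $\Delta\zeta_s$, $\Delta w_s$, $\Delta\pi_s$ from $\Delta X_s$ and $\Delta Y_s$ as given by \eqref{eq: sf2}. Writing everything at the left-limit $\varphi_{s_-}$ (and likewise $\zeta_{s_-}$, $\pi_{s_-}$, $w_{s_-}$) is essential because jump increments of trading shares are measured against pre-trade positions. Summing the jump contributions and adding the continuous martingale/drift part yields \eqref{eq zeta diff1}--\eqref{eq pi diff1}.

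\textbf{Main obstacle.} The purely mechanical part -- the differentiation and quotient rules -- is routine; the real care lies in the algebraic reduction of the $d\varphi^\uparrow$ and $d\varphi^\downarrow$ coefficients into the compact forms $(1+\zeta_t)$, $(1+(1-\varepsilon)\zeta_t)$, and $(1-\varepsilon\pi_t)$, which requires consistently rewriting $S_t/X_t$, $S_t/w_t$, and $\varphi_t/w_t$ in terms of $\zeta_t$, $\pi_t$, and $\varphi_t$. In the jump case, the secondary pitfall is making sure that the factor $\varphi_{s_-}$ (rather than $\varphi_s$) appears in every denominator; a careful expansion of $\Delta(Y/X)$ and $\Delta(Y/(X+Y))$ in terms of $\Delta X_s$, $\Delta Y_s$, $X_{s_-}$, $Y_{s_-}$ avoids any sign or pre/post-jump error.
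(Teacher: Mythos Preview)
Your plan is correct and is precisely the standard It\^o--formula derivation; the paper does not give any details at all but simply cites \cite[Lemma A.2]{gm2020} for the continuous dynamics \eqref{eq zeta diff} and says the jump case ``can be adapted''. So you are supplying the computation that the paper outsources, and for the continuous equations \eqref{eq zeta diff}--\eqref{eq pi diff} your three-step outline (compute $dw$, then $d\zeta$ via the quotient rule with finite-variation denominator, then $d\pi$ via the full It\^o quotient rule) goes through verbatim.

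One caveat for the jump case, since you emphasise that ``a careful expansion of $\Delta(Y/X)$ and $\Delta(Y/(X+Y))$ \ldots\ avoids any sign or pre/post-jump error'': if you carry this out exactly you will \emph{not} reproduce \eqref{eq zeta diff1} and \eqref{eq pi diff1} on the nose. For instance, at a sell jump one finds
\[
\Delta\pi_s=-\pi_{s_-}(1-\varepsilon\pi_{s_-})\frac{\Delta\varphi^\downarrow_s}{\varphi_{s_-}}\cdot\frac{w_{s_-}}{w_s},
\qquad
\Delta\zeta_s=-\zeta_{s_-}(1+(1-\varepsilon)\zeta_{s_-})\frac{\Delta\varphi^\downarrow_s}{\varphi_{s_-}}\cdot\frac{X_{s_-}}{X_s},
\]
so the stated jump increments are the linearisations in the trade size (the extra ratios equal $1+O(\Delta\varphi)$). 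The paper signals this with the phrasing ``similar equations'' rather than claiming exact identities, and for all downstream applications in Section~\ref{sec: transaction costs} (where trade sizes are $O(\varepsilon^{1/3})$ or smaller) the linearised form is what is actually needed. Just be aware that the discrepancy is intrinsic, not a bookkeeping slip on your part.
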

\begin{proof}
The dynamics \eqref{eq zeta diff} are stated and proved in \cite[Lemma A.2]{gm2020}, which can be adapted for discrete trades, so to obtain \eqref{eq zeta diff1}.
\end{proof}

\section{Results}\label{sec: Results}

\subsection{Transaction Costs Asymptotics}\label{sec: transaction costs}

\begin{figure}
\centering
\includegraphics[width=1\textwidth]{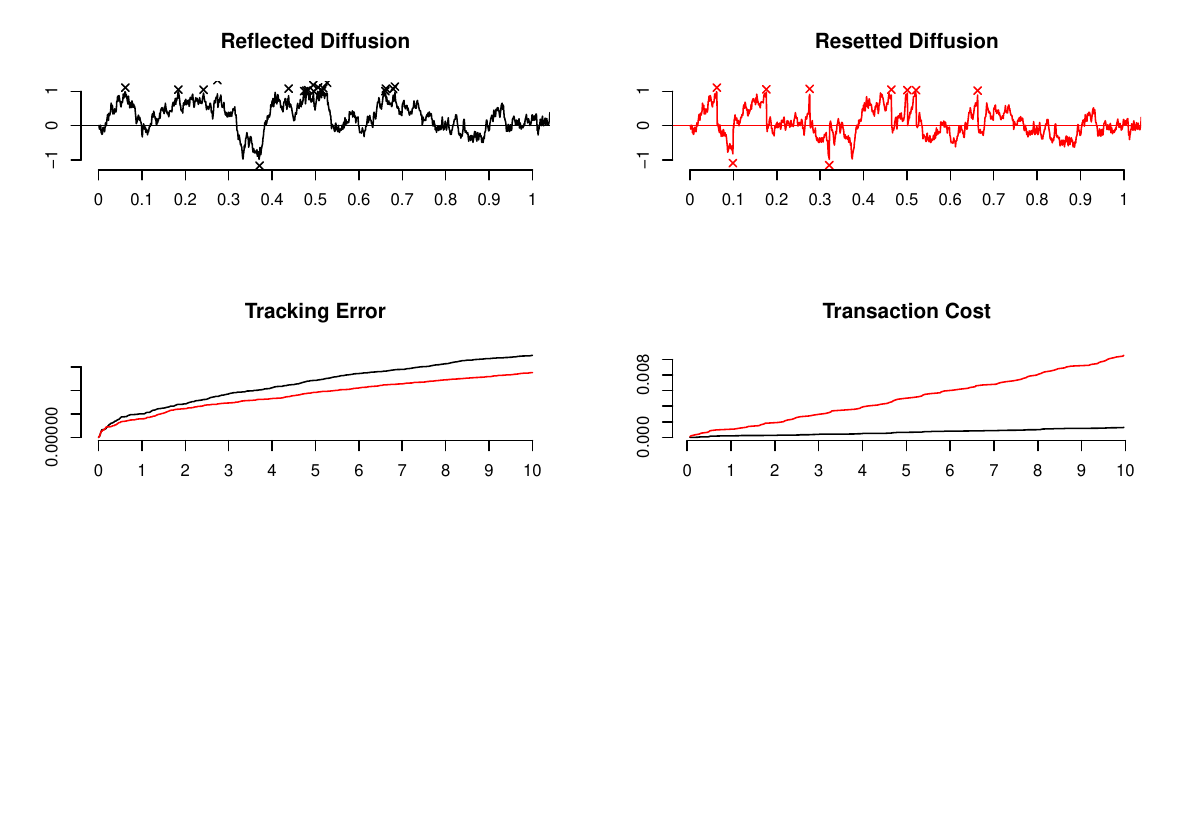}
\caption{The upper panel plots illustrate reflected (black line) and resetted (red line) centred Brownian motion with a standard deviation of 10\%. Reflected diffusions execute minimal trades - here approximated with small trades - while maximal trades are characteristic of resetted diffusions (as depicted in the second plot). Crosses on the graphs mark trading dates and positions. Notably, over a ten-year period, annualized tracking error tends to be smaller for resetted diffusions, whereas transaction costs are generally lower for minimal trades (lower panel).}
\label{fig: simulating trades}
\end{figure}

Reflected diffusions are continuous processes and of finite variation by construction, therefore strategies of this kind are admissible in the presence of proportional transaction costs. While optimal across a range of optimization criteria in theory, 
they are practically unrealized due to actual trades occurring discretely in time. 
In this chapter, we approximate continuous trading policies with such that trade fixed amounts of the risky asset, thus introduce jumps. The primary challenge lies in quantifying the minimal 
trade size required to achieve nearly optimal performance. While larger trades align better with a target, they also incur greater costs (cf.~Figure \ref{fig: simulating trades}).

In this section, we investigate transaction costs and trading frequency associated with three strategies aimed at maintaining a portfolio statistic near a target. These strategies include:

\begin{enumerate}
\item Minimal trades: These strategies operate continuously, engaging in minimal trading at the boundaries of specific no-trade regions (cf.~the first plot of Figure \ref{fig: simulating trades} (black lines)),
\item \label{item: 2} Maximal trades: These strategies reset a portfolio statistic to its target value once it reaches these boundaries (cf.~the first second plot in Figure \ref{fig: simulating trades} (red lines)),
\item Small trades: These involve smaller trades than \ref{item: 2} made in the direction of the target value.
\end{enumerate}

The general framework involves a geometric Brownian motion $\xi$ with a drift parameter $M$ and diffusion parameter $\Sigma$ evolving in an interval $[\xi_-,\xi_+]$, reflected at its boundaries $\xi_\pm$. The financial interpretation of $\xi$ varies among examples, representing either the risky-safe ratio $\zeta_t$ or the risky position $Y_t$. Similarly, the processes that confine the GBM within $[\xi_-,\xi_+]$ vary, influencing the definition of transaction costs.

Each subsection begins by outlining this general setup of a reflected or resetted geometric Brownian motion. We introduce two types of transaction costs: average transaction costs (ATC), measuring the negative contribution to infinitesimal returns due to non-zero bid-ask spread, and transaction costs (TRC), which quantify the actual long-term cost per time unit.
ATC arises for Levered ETFS in Section \ref{sec: LETFS} and for the risk-neutral investor in Section \ref{sec: risk neutral} as a penalty in the objective\footnote{Actually, also for the log-utility, the same penalty appears: rewrite the finite-horizon version of objective \eqref{eq: log criterion}, using It\^o's formula.}, whereas TRC provides the penalty for the log-contract in Section \ref{sec: log contract}.

\subsubsection{Minimal Trades}\label{sec: minimal trades}
The dynamics of a reflected GBM are
\begin{equation}\label{eq: GBM refl}
d\xi_t=\xi_t\left( M dt+\Sigma dW_t+dL_t-dU_t\right),
\end{equation}
where $L$ resp.~$U$ are local times, with the only points of increase (resp.~decrease) on $\xi_t=\xi_-$ (resp.~ $\xi_t=\xi_+$).

We shall need the stationary density of a reflected diffusion, cf.~\cite[Lemma B.5]{gm2020}:
\begin{lemma}\label{lem: existence controllable strategy}
Let $\xi_-<\xi_+$. Then there exists an admissible trading strategy ${\hat\varphi}$ such that the risky-safe ratio $\xi_t$ satisfies SDE \eqref{eq zeta diff}. Moreover, $(\xi_t,{\hat\varphi}_t^\uparrow,{\hat\varphi}_t^\downarrow)$ is a reflected diffusion on the interval $[\xi_-,\xi_+]$. In particular, if $\xi_->0$, then $\xi_t$ has stationary density equals
\begin{equation}\label{eq st de 1}
\nu(\xi):=\frac{\frac{2\mu}{\sigma^2}-1}{\xi_+^{\frac{2\mu}{\sigma^2}-1}-\xi_-^{\frac{2\mu}{\sigma^2}-1}}\xi^{\frac{2\mu}{\sigma^2}-2}, \quad \xi\in[\xi_-,\xi_+],
\end{equation}
and if $\xi_+<-1$, then
\begin{equation}\label{eq st de 2}
\nu(\xi):=\frac{\frac{2\mu}{\sigma^2}-1}{\vert\xi_-\vert^{\frac{2\mu}{\sigma^2}-1}-\vert\xi_+\vert^{\frac{2\mu}{\sigma^2}-1}}\vert\xi\vert^{\frac{2\mu}{\sigma^2}-2}, \quad \xi\in[\xi_-,\xi_+].
\end{equation}
\end{lemma}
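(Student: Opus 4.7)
The plan is to split the statement into two independent tasks: (i) constructing an admissible trading strategy $\hat\varphi$ whose induced risky-safe ratio $\xi_t$ is a reflected geometric Brownian motion on $[\xi_-,\xi_+]$, and (ii) deriving the stationary density in each of the two sign regimes.

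For (i), I would first invoke the standard Skorokhod problem for geometric Brownian motion on the interval $[\xi_-,\xi_+]$ with drift $\mu$ and volatility $\sigma$, driven by the given Brownian motion $B$. This yields a unique continuous process $\xi_t\in[\xi_-,\xi_+]$ together with non-decreasing, mutually singular local-time processes $L_t,U_t$ that grow only on $\{\xi_t=\xi_-\}$ and $\{\xi_t=\xi_+\}$ respectively, satisfying \eqref{eq: GBM refl}. To match this with \eqref{eq zeta diff}, I would define the trading strategy by
\[
\frac{d\hat\varphi^\uparrow_t}{\hat\varphi_t}=\frac{dL_t}{1+\xi_-},\qquad \frac{d\hat\varphi^\downarrow_t}{\hat\varphi_t}=\frac{dU_t}{1+(1-\varepsilon)\xi_+},
\]
which, thanks to the mutual singularity of $dL$ and $dU$, integrates explicitly to
\[
\hat\varphi_t=\varphi_0\exp\!\Bigl(\tfrac{L_t}{1+\xi_-}-\tfrac{U_t}{1+(1-\varepsilon)\xi_+}\Bigr).
\]
Plugging back into \eqref{eq zeta diff} verifies that $\xi_t$ is indeed the induced risky-safe ratio, since at $\xi_t=\xi_-$ the reflection coefficient $1+\zeta_t$ equals $1+\xi_-$ and at $\xi_t=\xi_+$ the coefficient $1+(1-\varepsilon)\zeta_t$ equals $1+(1-\varepsilon)\xi_+$. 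Right-continuity, monotonicity and non-anticipation of $\hat\varphi^\uparrow,\hat\varphi^\downarrow$ are inherited from $L,U$, and the self-financing identity \eqref{eq: sf1} is automatic once $(X_t,Y_t)$ are defined through those formulae. This is essentially the content of the invoked \cite[Lemma A.2]{gm2020}, which I would cite rather than reprove.

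For (ii), I would extract the density from the infinitesimal generator $\mathcal{A}f(\xi)=\mu\xi f'(\xi)+\tfrac12\sigma^2\xi^2 f''(\xi)$, subject to Neumann-type boundary conditions $f'(\xi_\pm)=0$ encoding the reflections. The cleanest route is to pass to $u_t:=\log|\xi_t|$: by It\^o's formula and the fact that $L,U$ act only at the endpoints, $u$ is a reflected Brownian motion with constant drift $\mu-\sigma^2/2$ and diffusion $\sigma$ on $[\log|\xi_-|,\log|\xi_+|]$ (with orientation reversed in the case $\xi_+<-1$). The stationary density of reflected BM with drift is the standard exponential $C\,e^{(2\mu/\sigma^2-1)u}$; pushing this back to $\xi$ through the change-of-variables formula (Jacobian $1/|\xi|$) gives a density proportional to $|\xi|^{2\mu/\sigma^2-2}$, and a direct integration pins down the normalising constants in \eqref{eq st de 1}--\eqref{eq st de 2}.

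The main obstacle is the first step of (i): one must check that $\hat\varphi_t$ stays strictly positive and that the formal local-time prescriptions above integrate to processes which actually reproduce the reflection in \eqref{eq zeta diff}. The exponential representation handles both points because the mutual singularity of $dL$ and $dU$ decouples the two reflection terms; everything else reduces to Skorokhod existence/uniqueness and an application of It\^o's formula. The density calculation in (ii) is a textbook adjoint computation whose only subtlety is the sign re-orientation when $\xi_+<-1$, handled automatically by the $|\xi|$-substitution.
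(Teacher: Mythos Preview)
The paper does not actually prove this lemma: it is stated as a quotation of \cite[Lemma B.5]{gm2020} (see the sentence immediately preceding the lemma), so there is no in-paper argument to compare against. Your proposal therefore supplies what the paper outsources.

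Your route is the natural one and is correct in outline. Two minor remarks. First, for part~(i) in the regime $\xi_+<-1$ the coefficients $1+\xi_-$ and $1+(1-\varepsilon)\xi_+$ are both negative, so the exponential representation of $\hat\varphi$ as written would make $\hat\varphi^\uparrow$ decreasing; the fix is simply that in this leveraged regime the roles of buying and selling at the two boundaries are interchanged (equivalently, $\varphi_t<0$), which your $|\xi|$-substitution in part~(ii) already anticipates but which should be made explicit in~(i) as well. Second, your citation of \cite[Lemma~A.2]{gm2020} covers only the derivation of the dynamics \eqref{eq zeta diff}, not the existence of the reflected pair $(L,U)$; for that you correctly invoke the Skorokhod construction, but the precise reference to cite alongside would be \cite[Lemma~B.5]{gm2020} itself (or Tanaka \cite{tanaka}), which is exactly what the paper does.
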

The trading strategy in Lemma \ref{lem: existence controllable strategy} is called ``control limit policy", cf.~ \cite{taksar1988diffusion,em_1_2024}.

We start with the following:
\begin{lemma}\label{lem: GBM LU}
\begin{align}\label{eq: L}
L_\infty&:=\lim_{T\rightarrow\infty}\frac{1}{T}\int_0^T dL_t=\lim_{T\rightarrow\infty}\frac{1}{T}L_t=\frac{\Sigma^2}{2}\frac{2 M/\Sigma^2-1}{\left((\xi_+/\xi_-)^{2M/\Sigma^2-1}-1\right)},\\\label{eq: U}
U_\infty&:=\lim_{T\rightarrow\infty} \frac{1}{T}U_t=\frac{\Sigma^2}{2}\frac{1-2 M/\Sigma^2}{\left((\xi_+/\xi_-)^{1-2M/\Sigma^2}-1\right)}.
\end{align}
\end{lemma}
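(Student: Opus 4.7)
The plan is to extract two linear relations between $L_\infty$ and $U_\infty$ via two well-chosen applications of It\^o's formula, then solve the resulting $2\times 2$ system. Throughout, I use that the reflected GBM $\xi$ takes values in the compact interval $[\xi_-,\xi_+]$, so any bounded functional of $\xi$ divided by $T$ tends to $0$, and any continuous local martingale of the form $\int_0^\cdot f(\xi_t)\,dW_t$ with $f$ bounded has quadratic variation growing at most linearly in $T$, so it too vanishes after dividing by $T$.

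\textbf{Step 1: a relation from $\log\xi_t$.} Applying It\^o to $\log\xi_t$ and using \eqref{eq: GBM refl} gives
\begin{equation}
\log\xi_T-\log\xi_0=\bigl(M-\tfrac{\Sigma^2}{2}\bigr)T+\Sigma W_T+L_T-U_T.
\end{equation}
Dividing by $T$ and letting $T\to\infty$, the left side is bounded, $W_T/T\to 0$ a.s., and the local-time terms converge to $L_\infty$ and $U_\infty$ by the definitions in the statement. This yields
\begin{equation}
U_\infty-L_\infty=M-\tfrac{\Sigma^2}{2}.
\end{equation}

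\textbf{Step 2: a relation from a power $\xi_t^p$.} Choose the scale exponent $p:=1-2M/\Sigma^2$ so that the drift in $d\xi_t^p$ vanishes. It\^o's formula, together with the fact that $dL_t$ is carried by $\{\xi_t=\xi_-\}$ and $dU_t$ by $\{\xi_t=\xi_+\}$, gives
\begin{equation}
\xi_T^p-\xi_0^p=p\Sigma\int_0^T\xi_t^p\,dW_t+p\bigl(\xi_-^p L_T-\xi_+^p U_T\bigr).
\end{equation}
Dividing by $T$ and letting $T\to\infty$, the boundary-free term and the martingale both vanish (boundedness of $\xi$ again), leaving
\begin{equation}
\xi_-^p L_\infty=\xi_+^p U_\infty,\qquad\text{i.e.}\qquad \frac{L_\infty}{U_\infty}=(\xi_+/\xi_-)^{1-2M/\Sigma^2}.
\end{equation}

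\textbf{Step 3: solve the system.} Writing $q:=2M/\Sigma^2-1$ and $r:=\xi_+/\xi_-$, the two identities read $U_\infty-L_\infty=\Sigma^2 q/2$ and $L_\infty=r^{-q}U_\infty$. A one-line elimination gives $U_\infty=\frac{\Sigma^2 q}{2(1-r^{-q})}$ and $L_\infty=\frac{\Sigma^2 q}{2(r^q-1)}$, which matches \eqref{eq: L}--\eqref{eq: U} after rewriting $1-r^{-q}=r^q(r^q-1)/r^{2q}$ and substituting back $q=2M/\Sigma^2-1$.

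\textbf{Anticipated obstacle.} The only non-routine point is justifying that the $L$- and $U$-averages converge to deterministic limits in the first place; this rests on ergodicity of the reflected diffusion, which is provided by Lemma \ref{lem: existence controllable strategy} via the explicit stationary density. Once ergodicity is in hand, the two It\^o identities collapse the problem to algebra. The degenerate case $p=0$ (equivalently $2M=\Sigma^2$) must be treated separately, either by a limiting argument in $p$ or by repeating Step 2 with $\log\xi_t$ playing the role of $\xi_t^p$; the resulting formulas extend continuously.
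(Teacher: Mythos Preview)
Your proof is correct, and in fact takes a cleaner route than the paper's. The paper works on the log scale $\eta_t=\log\xi_t$, applies It\^o to a single quadratic test function $f_-(\eta)=\tfrac12(\eta_+-\eta)^2/(\eta_+-\eta_-)$ chosen so that $f'(\eta_+)=0$ and $f'(\eta_-)=-1$, which isolates $L_T$ but leaves a drift integral $\int_0^T f'(\eta_t)\,dt$ that must then be evaluated via the ergodic theorem against the explicit stationary density of Lemma~\ref{lem: existence controllable strategy}; a symmetric argument handles $U_\infty$. Your approach instead pairs the trivial identity from $\log\xi$ with the scale function $\xi^p$, $p=1-2M/\Sigma^2$, whose drift vanishes by construction; the two It\^o identities then form a non-degenerate $2\times2$ linear system for $L_T/T$ and $U_T/T$, and no stationary density or ergodic integral is needed at all. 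Indeed, your worry in the ``anticipated obstacle'' is superfluous: the very system you derive already forces $L_T/T$ and $U_T/T$ to converge a.s.\ (the coefficient matrix has determinant $\xi_-^p-\xi_+^p\neq0$ when $p\neq0$, so you can solve for each average and read off convergence from the right-hand sides). What the paper's method buys is a template that works for arbitrary test functions and drifts, which it reuses later for resetted diffusions; what your method buys is that for this particular lemma the answer drops out with no side computation.
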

\begin{proof}
(Cf.~ \cite[Lemma C.2]{gerhold.al.11} which is slightly different version, with a different proof.) For convenience, we perform a logarithm transformation: The process $\eta_t:=\log(\xi_t)$ is a reflected Brownian motion on $[\eta_-,\eta_+]$, where $\eta_\pm=\log(\xi_\pm)$, with dynamics
\[
d\eta_t= (M-\Sigma^2/2) dt+\Sigma dW_t+dL_t-dU_t.
\]
To extract $L_\infty$, apply It\^o's formula to the function $f_-(\eta)=\frac{1}{2}\frac{(\eta_+-\eta)^2}{\eta_+-\eta_-}$. As $f'(\eta_+)=0$, and $f'(\eta_-)=-1$, we get
\[
f(\eta_T)=f(\eta_0)-L_T+(M-\Sigma^2/2) \int_0^T f'(\eta_t)dt+\Sigma \int_0^T f'(\eta)dW_t+\frac{\Sigma^2}{2}\frac{1}{\eta_+-\eta_-}T.
\]
Dividing by $T$ and letting it tend to infinity, we have
\[
\lim_{T\rightarrow \infty}\frac{L_T}{T}=\lim_{T\rightarrow\infty}\frac{M-\Sigma^2/2}{T}\int_0^T f'(\eta_t)dt+\frac{\Sigma^2}{2(\eta_+-\eta_-)}.
\]
The limit on the right side can be computed by the Ergodic Theorem \cite[Chapter II]{borodin} such that
\[
\lim_{T\rightarrow\infty}\frac{1}{T}\int_0^T f'(\eta_t)dt=\int_{\eta_-}^{\eta_+}f'(u)g(u)du,
\]
where the stationary density $g$ of $\eta$ follows from Lemma \ref{lem: existence controllable strategy}
\[
g(\eta)=\frac{\frac{2M}{\Sigma^2}-1}{e^{(\frac{2M}{\Sigma^2}-1)\eta_+}-e^{(\frac{2M}{\Sigma^2}-1)\eta_-}} e^{(\frac{2M}{\Sigma^2}-1)\eta},\quad \eta\in[\eta_-,\eta_+].
\]
In total, 
\[
\lim_{T\rightarrow \infty}\frac{L_T}{T}=\frac{\Sigma^2}{2}\frac{2 M/\Sigma^2-1}{\left(e^{(\eta_+-\eta_-)(2M/\Sigma^2-1)}-1\right)},
\]
from which \eqref{eq: L} follows. A similar argument produces \eqref{eq: U}.

\end{proof}

We consider now two important examples, which both are encountered in this paper. The first one trades, so to keep the risky position approximately constant, a strategy that is
required in the optimal replication of a Log contract:
\begin{proposition}\label{prop: constant Y strategy}
For $y_\star>0$, consider the strategy that trades minimally, by only buying at $y_-<y_\star$ and selling at $y_+>y_\star$ as much as necessary to keep the
risky-position $Y_t$ in the interval $[y_-,y_+]$. If the trading boundaries are of the form\footnote{For cosmetic reasons, which shall be clear later, we subtract the second order term from each boundary.
However, since $B_\pm$ may be any real number, our choice does not lead to a loss of generality.}
\begin{equation}\label{eq: ans boundaries Y} 
y_\pm=y_\star \pm A_\pm \varepsilon^{1/3}- B_\pm\varepsilon^{2/3}+O(1),
\end{equation}
with $A_\pm>0$, and $B_\pm\in\mathbb R$, then the long-run, average transaction costs are
\begin{equation}\label{eq: ATC GBM exactd3}
\text{TRC}:=\lim_{T\rightarrow\infty}\frac{\varepsilon}{T}\int_0^T S_t d\varphi^\downarrow_t=\varepsilon y_+\times \frac{\sigma^2}{2}\frac{1-2 (\mu+r)/\sigma^2}{\left((y_+/y_-)^{1-2(\mu+r)/\sigma^2}-1\right)}.
\end{equation}
For sufficiently small $\varepsilon$, we have the following asymptotic formula:
\begin{equation}\label{eq: ATC GBM asy1}
\text{TRC}=\frac{\sigma^2 y_\star^2}{2(A_++A_-)}\varepsilon^{2/3}+\frac{\sigma^2}{2} \left(\frac{\mu+r}{\sigma^2}+\frac{A_+^2-A_-^2+y_\star (B_+-B_-)}{(A_++A_-)^2}\right)\varepsilon+O(\varepsilon^{4/3}).
\end{equation}
\end{proposition}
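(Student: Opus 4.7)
The plan is threefold: identify the risky position $Y_t$ as a reflected geometric Brownian motion on $[y_-,y_+]$, express the long-run selling volume as a boundary local time, and finally Taylor expand the resulting closed form from Lemma \ref{lem: GBM LU}.

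Between trades $\varphi_t$ is constant, so $dY_t=\varphi_t dS_t=Y_t((\mu+r)dt+\sigma dB_t)$; including the trading terms, the self-financing identity \eqref{eq: sf1} gives
\[dY_t = Y_t\bigl((\mu+r)dt+\sigma dB_t\bigr)+S_t\bigl(d\varphi^\uparrow_t-d\varphi^\downarrow_t\bigr).\]
Matching this with the canonical reflected SDE \eqref{eq: GBM refl} for $M=\mu+r$, $\Sigma=\sigma$ and $\xi_\pm=y_\pm$ forces the identifications $Y_t dL_t = S_t d\varphi^\uparrow_t$ on $\{Y_t=y_-\}$ and $Y_t dU_t = S_t d\varphi^\downarrow_t$ on $\{Y_t=y_+\}$. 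Since $Y_t=y_+$ whenever $dU_t>0$, the second identity integrates to $\int_0^T S_t d\varphi^\downarrow_t = y_+ U_T$; multiplying by $\varepsilon/T$ and sending $T\to\infty$ gives $\text{TRC}=\varepsilon y_+ U_\infty$, and substituting $U_\infty$ from \eqref{eq: U} delivers the exact formula \eqref{eq: ATC GBM exactd3}.

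For the asymptotic expansion, set $u:=\varepsilon^{1/3}$ and $\alpha:=1-2(\mu+r)/\sigma^2$. Inserting the ansatz \eqref{eq: ans boundaries Y} into two-term expansions of $\log y_\pm$ gives
\[\log\frac{y_+}{y_-}=\frac{A_++A_-}{y_\star}u+\Bigl(\frac{B_--B_+}{y_\star}+\frac{A_-^2-A_+^2}{2y_\star^2}\Bigr)u^2+O(u^3),\]
so $(y_+/y_-)^\alpha-1=cu+(d+c^2/2)u^2+O(u^3)$ with $c:=\alpha(A_++A_-)/y_\star$ and $d$ equal to $\alpha$ times the bracket above. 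Inverting this Laurent series yields $\alpha/((y_+/y_-)^\alpha-1)=y_\star/((A_++A_-)u)+C_0+O(u)$ for an explicit $C_0$; multiplying by $\varepsilon y_+\sigma^2/2$, carrying the $A_+u$ correction of $y_+$, and simplifying via $A_-^2-A_+^2=(A_--A_+)(A_++A_-)$ together with $\alpha=1-2(\mu+r)/\sigma^2$ produces the $\varepsilon^{2/3}$ and $\varepsilon$ coefficients of \eqref{eq: ATC GBM asy1}.

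The main obstacle is arithmetical bookkeeping rather than analysis: the expansions of $y_\pm$ and $\log(y_+/y_-)$ must be carried consistently to order $u^2$ so that the $B_\pm$ contributions survive, and the linear term $A_+u$ in $y_+$ must not be dropped because it interacts with the leading $u^{-1}$ singularity of $\alpha/((y_+/y_-)^\alpha-1)$ to produce an additional $O(1)$ piece that feeds into the $\varepsilon$ coefficient. Beyond this bookkeeping, no conceptual difficulty arises once the identification $Y_t\leftrightarrow$ reflected GBM is in place.
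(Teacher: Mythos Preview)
Your proof is correct and follows essentially the same route as the paper: identify $Y_t$ as a reflected geometric Brownian motion with $M=\mu+r$, $\Sigma=\sigma$, $dL_t=d\varphi^\uparrow_t/\varphi_t$, $dU_t=d\varphi^\downarrow_t/\varphi_t$, invoke Lemma~\ref{lem: GBM LU} for $U_\infty$, and then expand the closed form in powers of $\varepsilon^{1/3}$. The paper's own proof is terser---it does not spell out the step $\int_0^T S_t\,d\varphi^\downarrow_t=y_+U_T$ via the support condition of $dU_t$, nor the mechanics of the Taylor expansion---so your version is in fact a fleshed-out rendering of the same argument.
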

\begin{proof}
Since the risky position $Y_t=\varphi_t S_t$, we have $\text{TRC}=\lim_{T\rightarrow\infty}\frac{\varepsilon}{T}\int_0^T Y_t \frac{d\varphi^\downarrow_t}{\varphi_t}$. It follows that $Y_t$ is a reflected Geometric Brownian motion such as $Y_t=\xi_t$ in \eqref{eq: GBM refl} with $M=\mu+r$, $\Sigma=\sigma$
and $d L_t=\frac{d\varphi^\uparrow_t}{\varphi_t}$, $dU_t=\frac{d\varphi^\downarrow_t}{\varphi_t}$. Thus formula \ref{eq: ATC GBM exactd3} is an application of Lemma \ref{lem: GBM LU}. Plugging the asymptotics of the trading boundaries \eqref{eq: ans boundaries Y} into \eqref{eq: ATC GBM exactd3}, we produce a formal power series in $\delta=\varepsilon^{1/3}$, to get
\eqref{eq: ATC GBM asy1}.
\end{proof}

The next strategy trades, so to keep the proportion of wealth approximately constant: 
\begin{proposition}\label{prop: const prop minimal trades}
For $\Lambda>0$, consider the strategy that trades minimally, by only buying at $\pi_-<\Lambda$ and selling at $\pi_+>\Lambda$ as much as necessary to keep the
proportion of wealth in the risky asset $\pi_t$ in the interval $[\pi_-,\pi_+]$. If the trading boundaries are of the form
\begin{equation}\label{eq: ans boundaries Yz} 
\pi_\pm=\Lambda \pm A_\pm \varepsilon^{1/3}- B_\pm\varepsilon^{2/3}+O(1),
\end{equation}
with $A_\pm>0$, and $B_\pm\in\mathbb R$, then the long-run, average transaction costs are
\begin{equation}\label{eq: ATC GBM exact}
\text{ATC}:=\lim_{T\rightarrow\infty}\frac{\varepsilon}{T}\int_0^T \pi_t\frac{d\varphi^\downarrow_t}{\varphi_t}=\varepsilon \frac{\pi_+(1-\pi_+)}{1-\varepsilon \pi_+}\times \frac{\sigma^2}{2}\frac{1-2 (\mu+r)/\sigma^2}{\left((\frac{\pi_+(1-\pi_-)}{\pi_-(1-\pi_+)})^{1-2(\mu+r)/\sigma^2}-1\right)}.
\end{equation}
For sufficiently small $\varepsilon$, we have the following asymptotic formula:
\begin{align}\label{eq: ATC GBM asy}
\text{ATC}&=\frac{\sigma^2 \Lambda^2(\Lambda-1)^2}{2(A_++A_-)}\varepsilon^{2/3}+\frac{\sigma^2 (\Lambda-1)\Lambda}{2 (A_++A_-)^2} \\\nonumber&\times \left(-\frac{\mu}{\sigma^2}(A_+-A_-)^2-A_-^2 (\Lambda -1)+2 A_-A_+ \Lambda +A_+^2 (3
\Lambda -1)-(\Lambda -1) \Lambda (B_--B_+)\right)\varepsilon\\\nonumber&+O(\varepsilon^{4/3}).
\end{align}
\end{proposition}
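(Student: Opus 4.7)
The plan is to reduce the claim to Lemma \ref{lem: GBM LU} by passing to the risky-safe ratio $\zeta_t := \pi_t/(1-\pi_t) = Y_t/X_t$. Since $\pi \mapsto \pi/(1-\pi)$ is a strictly increasing bijection on $(0,1)$, the no-trade region $[\pi_-,\pi_+]$ corresponds to $[\zeta_-,\zeta_+]$ with $\zeta_\pm = \pi_\pm/(1-\pi_\pm)$. Equation \eqref{eq zeta diff} identifies $\zeta_t$, between trades, as a geometric Brownian motion with drift $\mu$ and diffusion $\sigma$, while at the boundaries the trading terms reflect $\zeta_t$ back. Lemma \ref{lem: existence controllable strategy} then asserts that for the control limit policy $\zeta_t$ coincides with a reflected diffusion of the form \eqref{eq: GBM refl}.

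To convert the ATC integral to a local time, I would compare \eqref{eq zeta diff} against \eqref{eq: GBM refl} at the selling boundary $\zeta = \zeta_+$, reading off
\begin{equation*}
dU_t = \bigl(1 + (1-\varepsilon)\zeta_+\bigr)\,\frac{d\varphi^\downarrow_t}{\varphi_t}.
\end{equation*}
Since $d\varphi^\downarrow_t$ is supported on $\{\pi_t = \pi_+\}$, the integrand $\pi_t$ in the definition of $\text{ATC}$ equals the constant $\pi_+$ on that support, so $\text{ATC} = \varepsilon\pi_+\,U_\infty/(1+(1-\varepsilon)\zeta_+)$. The algebraic identities $1 + (1-\varepsilon)\zeta_+ = (1-\varepsilon\pi_+)/(1-\pi_+)$ and $\zeta_+/\zeta_- = \pi_+(1-\pi_-)/(\pi_-(1-\pi_+))$, together with the formula for $U_\infty$ from Lemma \ref{lem: GBM LU}, then immediately yield the exact expression \eqref{eq: ATC GBM exact}.

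For the asymptotic expansion \eqref{eq: ATC GBM asy}, I would substitute the ansatz $\pi_\pm = \Lambda \pm A_\pm\varepsilon^{1/3} - B_\pm\varepsilon^{2/3} + O(\varepsilon)$ into the exact formula and expand in powers of $\delta := \varepsilon^{1/3}$. The ratio $\zeta_+/\zeta_-$ takes the form $1 + c_1\delta + c_2\delta^2 + O(\delta^3)$ with $c_1 = (A_++A_-)/(\Lambda(1-\Lambda))$, so the factor $(1-2M/\Sigma^2)/((\zeta_+/\zeta_-)^{1-2M/\Sigma^2}-1)$ has a simple pole in $\delta$ at the origin; this pole is cancelled against the $\varepsilon$ prefactor to deliver the leading $\varepsilon^{2/3}$ term, while the $\varepsilon$-correction arises from the next Taylor coefficient of the denominator combined with the first-order expansion of the prefactor $\pi_+(1-\pi_+)/(1-\varepsilon\pi_+)$. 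The only real obstacle is bookkeeping: one must carry the expansion of $(\zeta_+/\zeta_-)^{1-2M/\Sigma^2}$ to third order in $\delta$ while keeping all dependencies on $A_\pm, B_\pm, \Lambda$ explicit, and computer algebra is essentially mandatory to match \eqref{eq: ATC GBM asy} without slips. Conceptually, the structure is identical to the proof of Proposition \ref{prop: constant Y strategy}; the additional complication here is purely the non-trivial dependence on $\pi_\pm$ of the prefactor that enters through the change of variables from $\pi$ to $\zeta$.
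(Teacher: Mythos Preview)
Your approach is correct and essentially identical to the paper's: pass to the risky-safe ratio $\zeta_t=\pi_t/(1-\pi_t)$, identify $\zeta$ via \eqref{eq zeta diff} as a reflected GBM with $M=\mu$, $\Sigma=\sigma$ and $dU_t=(1+(1-\varepsilon)\zeta_t)\,d\varphi^\downarrow_t/\varphi_t$, apply Lemma~\ref{lem: GBM LU}, and then expand the resulting closed form in $\delta=\varepsilon^{1/3}$. The only (inessential) caveat is that you phrase the change of variables as a bijection on $(0,1)$, whereas the proposition allows $\Lambda>1$ (leverage, $\zeta<0$); Lemma~\ref{lem: existence controllable strategy} covers that case as well and the argument goes through unchanged.
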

\begin{proof}
We change to the risky-safe ratio $\zeta_t=\frac{\pi_t}{1-\pi_t}$, such that the new trading boundaries have asymptotics
\[
\zeta_\pm=\frac{\Lambda}{1-\Lambda}\pm \frac{A_\pm}{(\Lambda-1)^2}\varepsilon^{1/3}-\frac{A_\pm^2+B_\pm(\Lambda-1)}{(\Lambda-1)^3}\varepsilon^{2/3}+O(\varepsilon).
\]
The risky-safe ratio satisfies the dynamics \eqref{eq zeta diff} and thus is a Geometric Brownian motion of the form \eqref{eq: GBM refl}, where $M=\mu$ and $\Sigma=\sigma$, while
\[
dL_t=(1+\zeta_t)\frac{d\varphi_t^\uparrow}{\varphi_t},\quad d U_t=(1+(1-\varepsilon)\zeta_t)\frac{d\varphi_t^\downarrow}{\varphi_t}.
\]
The rest of the proof is similar to that of Proposition \ref{prop: constant Y strategy}.
\end{proof}
\begin{remark}\label{rem: second order same sign impact no}
The second order coefficients $B_\pm$ of the trading boundaries \eqref{eq: ans boundaries Yz} 
only enter the third order term of the average trading costs \eqref{eq: ATC GBM asy}, and their impact depends on the difference $B_+-B_-$. For a range of criteria, including the ones in the present paper, we have $B_+=B_-$, whence these coefficients do not impact the average trading costs at third order. 
\end{remark}

\subsubsection{Maximal Trades}\label{sec: maximal trades}
In this section, we consider the process $\xi_t$ on $[\xi_-,\xi_+]$ that satisfies
\begin{equation}\label{eq: fixed jumps zeta}
\xi_t=\xi_0+\int_0^t \xi_{s_-}\left( Mds+\Sigma dW_s\right)+\sum_{s\leq t}\Delta L_s-\sum_{s\leq t}\Delta U _s
\end{equation}
with fixed jump sizes, such that we reset $\xi_t$ to some point $\xi_\star\in (\xi_-,\xi_+)$, whenever it hits $\xi_\pm$. We compute the average, long-run jump sizes, and then compute transaction costs for constant proportion trading strategies similar as in Section \ref{sec: minimal trades}. By taking logarithms we get $\eta_t:=\log(\xi_t)$, which is a resetted Brownian motion on $[\eta_-,\eta_+]$, where $\eta_\pm=\log(\xi_\pm)$, satisfying the dynamics
\begin{equation}\label{eq: eta with jumps}
\eta_t=\eta_0+ (M-\Sigma^2/2)t+\Sigma W_t+\sum_{s\leq t}\Delta l_s-\sum_{s\leq t}\Delta u_s.
\end{equation}
Whenever $\eta_t$ hits $\eta_-$ (resp. $\eta_+$), it is resetted to $\eta_\star=\log(\xi_\star)$. Thus we have $\Delta l=\eta_\star-\eta_-=\log(\xi_\star/\xi_-)$, and $\Delta u=\eta_+-\eta_\star=\log(\xi_+/\xi_\star)$. Define the differential operator
\[
\mathcal A f(x)=\frac{\Sigma^2}{2}f''(x)+(M-\Sigma^2/2)f'(x).
\]
By construction, $\eta$ is a Markov process on $[\eta_-,\eta_+]$.\footnote{In fact, $\eta_t$ is the unique strong solution to \eqref{eq: eta with jumps}.} By It\^o's formula, for any $f\in C^2([\eta_-,\eta_+])$,
\begin{align}\label{eq: Ito max}
f(\eta_t)&=f(\eta_0)+\int_0^t \sigma f'(\eta_s)dW_s+\int_0^t \mathcal A f(\eta_s)ds\\\nonumber&+(f(\eta_\star)-f(\eta_-))\vert \{s\leq t: \Delta L_s>0\}-(f(\eta_+)-f(\eta_\star))\vert \{s\leq t: \Delta U_s>0\}.
\end{align}
Thus, $\eta$ solves the martingale problem for $(\mathcal A,\mathcal D(\mathcal A))$, where $\mathcal D(\mathcal A)$ constitutes those functions $f\in C^b([\eta_-,\eta_+])$ for which $f(\eta_-)=f(\eta_\star)=f(\eta_+)$ and thus $f(\eta_t)-\int_0^t \mathcal A f(\eta_s)ds$ is a martingale.
\begin{lemma}\label{lem: stationary distribution}
The process $\eta$ is ergodic, with stationary density
\begin{equation}\label{eq: stationary density maximal trades}
\varphi(\eta)=\begin{cases}a_1+b_1 e^{\alpha \eta}\quad\text{for}\quad \eta_-\leq \eta\leq \eta_\star,\\
a_2+b_2 e^{\alpha \eta}\quad\text{for}\quad \eta_\star\leq \eta\leq \eta_+,\end{cases}
\end{equation}
where $\alpha=2M/\Sigma^2-1$ and
\begin{align*}
a_1&=-b_1 e^{\alpha\eta_- },\quad
a_2=-b_2 e^{\alpha\eta_+ },\\
b_1&=\frac{e^{\alpha \eta_\star}-e^{\alpha \eta_+}}{(\eta_--\eta_\star) e^{\alpha(\eta_-+\eta_\star)}+(\eta_+-\eta_-) e^{\alpha(\eta_++\eta_-)}+(\eta_{\star}-\eta_+) e^{\alpha(\eta_\star+\eta_+)}
},\\
b_2&=\frac{e^{\alpha \eta_\star}-e^{\alpha \eta_-}}{(\eta_--\eta_\star) e^{\alpha(\eta_-+\eta_\star)}+(\eta_+-\eta_-) e^{\alpha(\eta_++\eta_-)}+(\eta_{\star}-\eta_+) e^{\alpha(\eta_\star+\eta_+)}
}.
\end{align*}
The ergodic theorem applies, such that for any bounded continuous\footnote{This results holds for all essentially bounded, Borel-measurable functions.} function $f$
\begin{equation}\label{eq: limit problem}
\lim_{T\rightarrow\infty} \frac{1}{T}\int_0^T f(\eta_s)ds=\int_{\eta_-}^{\eta_+} f(u)\varphi(u)du.
\end{equation}
\end{lemma}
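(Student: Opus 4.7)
The plan splits into two parts: first, establish that $\eta$ is positive recurrent (hence an ergodic invariant probability measure exists) via the regenerative structure induced by the resets; second, identify the density $\varphi$ by reducing invariance to a boundary value problem for the formal adjoint of $\mathcal A$.

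For positive recurrence, I let $\tau_n$ denote the successive epochs at which $\eta$ exits $(\eta_-,\eta_+)$ and is immediately reset to $\eta_\star$. Between resets, $\eta$ is just a Brownian motion with constant drift $M-\Sigma^2/2$ started at $\eta_\star$ and stopped at the exit from the bounded interval $[\eta_-,\eta_+]$, so $\mathbb E[\tau_{n+1}-\tau_n]<\infty$. Hence $\eta$ is a regenerative process with finite-mean regeneration cycle, and the classical renewal/ergodic theorem yields both the existence of a unique invariant probability measure and the limit statement \eqref{eq: limit problem}.

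For the density, the martingale problem in \eqref{eq: Ito max} shows that $\mathcal D(\mathcal A)$ consists of $C^2$ functions satisfying $f(\eta_-)=f(\eta_\star)=f(\eta_+)$, and a density $\varphi$ is invariant iff $\int_{\eta_-}^{\eta_+}(\mathcal A f)\,\varphi\,du=0$ for every such $f$. Splitting the integral at $\eta_\star$ and integrating by parts twice on each of $[\eta_-,\eta_\star]$ and $[\eta_\star,\eta_+]$ reduces this to
\[
\int_{\eta_-}^{\eta_+} f\cdot \mathcal A^*\varphi\,du \;+\; (\text{boundary terms at } \eta_-,\eta_\star,\eta_+)\;=\;0,
\]
with $\mathcal A^*\varphi=\tfrac{\Sigma^2}{2}\varphi''-(M-\tfrac{\Sigma^2}{2})\varphi'$. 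Choosing $\varphi$ to solve $\mathcal A^*\varphi=0$ on each subinterval forces the piecewise Ansatz \eqref{eq: stationary density maximal trades} with $\alpha=2M/\Sigma^2-1$. Since $f'(\eta_-),f'(\eta_\star),f'(\eta_+)$ are unconstrained while only the common value $f(\eta_-)=f(\eta_\star)=f(\eta_+)$ is prescribed, the boundary terms vanish for every admissible $f$ precisely when $\varphi(\eta_-)=\varphi(\eta_+)=0$ (Dirichlet at the reset boundaries), $\varphi(\eta_\star^-)=\varphi(\eta_\star^+)$ (continuity), and $\varphi'(\eta_\star^+)-\varphi'(\eta_\star^-)=\varphi'(\eta_+)-\varphi'(\eta_-)$ (flux balance at the source point). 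The Dirichlet pair fixes $a_i=-b_i e^{\alpha\eta_\mp}$; continuity and the flux-jump relation both collapse to the single equation $b_1(e^{\alpha\eta_\star}-e^{\alpha\eta_-})=b_2(e^{\alpha\eta_\star}-e^{\alpha\eta_+})$, and the normalisation $\int\varphi\,du=1$ pins down the common scale to give the stated expressions for $b_1,b_2$.

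The main obstacle is the careful accounting at $\eta_\star$, where $\varphi$ is continuous but has a kink: two separate conditions (continuity and a derivative jump proportional to the reset flux) must be extracted from the single test-function constraint $f(\eta_-)=f(\eta_\star)=f(\eta_+)$, and one must check that flux balance is consistent with, rather than over-determined by, the Dirichlet data. Uniqueness of the invariant probability from the regenerative argument then ensures that the density constructed in this way is indeed the stationary one.
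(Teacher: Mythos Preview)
Your proposal follows essentially the same two-step architecture as the paper: exploit the i.i.d.\ regeneration cycles between resets (SLLN/renewal theory) for the ergodic limit, and solve the adjoint equation $\mathcal A^*\varphi=0$ piecewise to obtain the density. The paper simply \emph{imposes} the boundary conditions (Dirichlet at $\eta_\pm$, continuity at $\eta_\star$, mass one) and then, restricting to the special symmetric driftless case $M=\Sigma^2/2$, $\eta_\star=(\eta_-+\eta_+)/2$, verifies directly that this $\varphi$ is the ergodic density by solving the Dirichlet problem $\mathcal A w+f=0$, $w(\eta_\pm)=0$ and checking that $w(\eta_\star)/\mathbb E[\tau_1]=\int f\varphi$. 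Your route instead \emph{derives} the boundary conditions from the invariance requirement $\int(\mathcal A f)\varphi=0$ for all $f\in\mathcal D(\mathcal A)$; this also makes transparent why the flux-balance condition at $\eta_\star$ collapses to the continuity equation once Dirichlet and $\mathcal A^*\varphi=0$ are in place, so the system is exactly determined. The gain of your approach is that it covers the general drifted, asymmetric case without the paper's reduction to a special case; the cost is that you tacitly need $\mathcal D(\mathcal A)$ to be rich enough that the adjoint characterisation of invariance is sharp, a point the paper circumvents by its explicit Dirichlet-problem verification.
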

\begin{proof}
To derive the stationary density $\varphi$, we solve the adjoint equation
\[
\mathcal A^\star \varphi=\frac{\Sigma^2}{2}\varphi''(\eta)-(M-\Sigma^2/2) \varphi'(\eta)=0
\] 
for $\varphi^1$ on $(\eta_-,\eta_\star)$ and for $\varphi^2$ on $(\eta_\star,\eta_+)$, where $\varphi^i(\eta)=a_i+b_i e^{\alpha \eta}$ ($i=1,2$).
The constants $a_i$ and $b_i$ are determined by insisting on continuity ($\varphi^1(\eta_\star)=\varphi^2(\eta_\star)$), vanishing density at the boundaries ($\varphi^1(\eta_-)=\varphi^2(\eta_+)=0$) and mass one ($\int_{\eta_-}^{\eta_+} \varphi(u)du=1$). 

To prove ergodicity, we assume for brevity of exposition that $M=\Sigma^2/2$ and that $\eta_\star=\frac{\eta_-+\eta_+}{2}$ (the general case can be proved similarly, but is more tedious due to lengthy formulas). Furthermore, by shifting the process, we can assume without loss of generality that $\eta_\star=0$. In this case the stationary density takes the specific form
\begin{equation}\label{st1}
\varphi(\eta)=\begin{cases}\frac{2(\eta-\eta_-)}{(\eta_\star-\eta_-)(\eta_+-\eta_-)},\quad \eta_-\leq\eta\leq\eta_\star,\\
\frac{2(\eta_+-\eta)}{(\eta_+-\eta_\star)(\eta_+-\eta_-)},\quad \eta_\star\leq \eta\leq\eta_+.\end{cases}
\end{equation}
For computing the limit on the left side of \eqref{eq: limit problem}, the starting point $\eta_0=\eta$ is irrelevant,
as the diffusion is resetted every time it hits the boundary (the integral up to the first hitting time gives a vanishing contribution to the overall limit), thus starting from $\eta_0=\eta_\star=0$. Therefore we may without loss of generality start the process $\eta_t$ also initially at $\eta_0=0$. Denote the consecutive exit times
\[
\tau_1:=\inf\{t>0\mid X_t\notin (\eta_-,\eta_+)\},\quad \tau_k:=\inf\{t>\tau_{k-1}\mid X_t\notin (\eta_-,\eta_+)\},\quad k\geq 2.
\]
Since $\eta_t$ is Markovian, the inter-exit times $T_k:=\tau_k-\tau_{k-1}$ ($k\geq 2$) are independent and identically distributed $\sim \tau_1$ with first moment
\begin{equation}\label{eq: moment1}
\mathbb E[\tau_1]=\frac{-\eta_-\eta_+}{\Sigma^2}.
\end{equation}
Thus by the strong law of large numbers (SLLN), almost surely
\begin{equation}\label{eq: average interexit times}
\lim_{N\rightarrow \infty}\frac{1}{N}\sum_{k=1}^\infty T_k=\frac{-\eta_-\eta_+}{\Sigma^2}.
\end{equation}
Similarly, the random variables $Y_k:=\int_{\tau_{k-1}}^{\tau_k} f(\eta_s)ds$ are independent, identically distributed with mean $\mathbb E[Y_1]=\mathbb E[\int_0^{\tau_1} f(\eta_s)ds]$
and thus by the SLLN, almost surely
\begin{equation}\label{eq: moment 1}
\lim_{N\rightarrow \infty}\frac{1}{N}\sum_{k=1}^N Y_k=\mathbb E[\int_0^{\tau_1} f(\eta_s)ds].
\end{equation}
The expectation on the right hand of \eqref{eq: moment 1} is computed in the following. Let $w$ be the unique solution of the Dirichlet problem
\begin{equation}\label{eq: dirichlet}
\mathcal A w+f=0,\quad w(\eta_-)=w(\eta_+)=0.
\end{equation}
By It\^o's formula for $0\leq t< \tau_1$,
\[
w(\eta_t)=w(\eta_0)+\int_0^t \mathcal A w(\eta_s)ds+\int_0^t \sigma w'(\eta_s)dW_s.
\]
Since $w(\eta_{\tau_1}-)=0$, $\eta_0=0$, and \eqref{eq: dirichlet} we get by optimal stopping
\[
w(0)=\mathbb E[\int_0^{\tau_1} f(\eta_s)ds].
\]
The explicit solution of \eqref{eq: dirichlet} is given by
\[
w(\eta)=\frac{2}{\sigma^2}\left(\frac{\eta-\eta_-}{\eta_+-\eta_-}\int_{\eta_-}^{\eta_+}(\eta_+-\xi)f(\xi)d\xi-\int_{\eta_-}^\eta (\eta -\xi)f(\xi)d\xi\right).
\]
Comparing with the stationary density \eqref{st1} reveals that
\[
w(0)=\frac{\vert\eta_-\eta_+\vert}{\Sigma^2}\mathbb E[\int_0^{\tau_1} f(\eta_s)ds]=\frac{\vert\eta_-\eta_+\vert}{\Sigma^2}\int_{\eta_-}^{\eta_+}f(\xi)\varphi(\xi)d\xi
\]
and thus by \eqref{eq: moment 1}, 
\[
\frac{\mathbb E[\int_0^{\tau_1} f(\eta_s)ds]}{\mathbb E[\tau_1]}=\int_{\eta_-}^{\eta_+}f(\xi)\varphi(\xi)d\xi.
\]
A combination with \eqref{eq: average interexit times} and \eqref{eq: moment 1} yields thus
\begin{equation}\label{eq: tau limit}
\lim_{N\rightarrow \infty}\frac{1}{\tau_N}\int_0^{\tau_N} f(\eta_s)ds=\frac{\lim_{N\rightarrow \infty}\frac{\sum_{k=1}^N Y_k}{N}}{\lim_{N\rightarrow \infty}\frac{\sum_{k=1}^N T_k}{N}}=\frac{\mathbb E[\int_0^{\tau_1} f(\eta_s)ds]}{\mathbb E[\tau_1]}=\int_{\eta_-}^{\eta_+}f(\xi)\varphi(\xi)d\xi.
\end{equation}
Note that this limit involves annualizing over random horizons the $\tau_N$. It remains to show that this limit is the same for annualizing with deterministic time $T$. For each $t\geq 1$, let $\nu(t)$ be the first passage time of $t$, for the renewal process $(\tau_n)_{n\geq 1}$, where
$\tau_n:=\sum_{k=1}^n T_k$. That is,
\[
\nu(t):=\min\{n\mid \tau_n > t\}.
\]
A combination of \cite[Theorem 4.1 and Theorem 3.4.2]{gut} yields that
almost surely,
\begin{equation}\label{eq: gut1}
\lim_{n\rightarrow\infty}\frac{\nu(t)}{t}=\frac{1}{\mathbb E[\tau_1]},
\end{equation}
and 
\begin{equation}\label{eq: gut2}
\lim_{n\rightarrow\infty}\frac{\tau_{\nu(n)}}{n}=1.
\end{equation}
Therefore,
\begin{align*}
&\lim_{n\rightarrow\infty}\frac{1}{n}\int_0^n f(\eta_s)ds=\lim_{n\rightarrow\infty}\frac{\tau_{\nu(n)}}{n}\left(\frac{1}{\tau_{\nu(n)}}\int_0^{\tau_{\nu(n)}}f(\eta_s)ds-\frac{1}{\tau_{\nu(n)}}\int_n^{\tau_{\nu(n)}}f(\eta_s)ds\right)\\
&=\lim_{n\rightarrow\infty}\frac{1}{\tau_n}\int_0^{\tau_n} f(\eta_s)ds,
\end{align*}
where because of \eqref{eq: gut1}, \eqref{eq: gut2} and \eqref{eq: tau limit} the first term converges to the desired limit, whereas, the second term, bounded by
\[
\|f\|_{\infty}\frac{\tau_{\nu(n)}-n}{\tau_{\nu(n)}}
\]
converges due to \eqref{eq: gut2} to zero, as $n\rightarrow\infty$. Recalling \eqref{eq: tau limit} yields the claim.
\end{proof}

In Figure \ref{fig: simulating density maximal trades}, the histogram of the (approximate) distribution of $\eta_t$ for $t=5$ years is given, along with
the exact stationary distribution provided by Lemma \ref{lem: stationary distribution}. The fit of this asymmetric distribution appears to be excellent, reflecting
that daily sampling is sufficient for financial data with reasonable parameter ranges, and that $\eta$ approaches stationarity for a finite time horizon of five years. 
\begin{figure}
\centering
\includegraphics[width=0.6\textwidth]{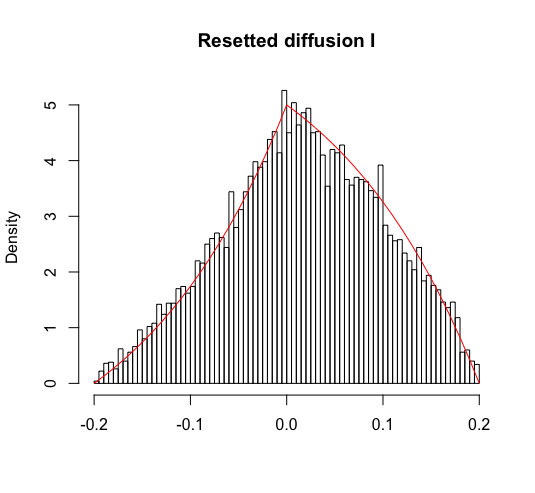}
\caption{For $M-\Sigma^2/2=8\%$ and $\Sigma=16\%$, and $\eta_-=-.2$, $\eta_\star=0$, and $\eta_+=+0.2$, we plot the stationary density of the resetted process
$\eta$. The histogram gives the value of $\eta_T$ for $T=5$ years, from ten thousand simulated path (daily frequency). }
\label{fig: simulating density maximal trades}
\end{figure}

We are able to count the number of downward jumps of $\eta$, by appealing to the ergodic theorem:
\begin{proposition}
\begin{equation}\label{eq: max jumps exact}
\lim_{T\rightarrow\infty}\frac{\vert\{s<T: \Delta U_s>0\}\vert}{T}=\frac{-(\eta_-+\eta_\star) \left(M-\frac{\Sigma ^2}{2}\right)+\frac{1}{2} \left(2 M-\Sigma ^2\right) \Theta+\Sigma ^2}{(\eta_+-\eta_-) (\eta_+-\eta_\star)},
\end{equation}
where
\begin{small}
\begin{align*}
\Theta&= \frac{(\eta_--\eta_\star) (\eta_--\eta_+) \left(e^{\frac{2 (\eta_-+\eta_\star) \left(M-\Sigma
^2\right)}{\Sigma ^2}}-e^{\frac{2 (\eta_-+\eta_+) \left(M-\Sigma ^2\right)}{\Sigma ^2}}\right)}{(\eta_--\eta_\star) e^{\frac{2 (\eta_-+\eta_\star) \left(M-\Sigma ^2\right)}{\Sigma ^2}}+(\eta_+-\eta_-)
e^{\frac{2 (\eta_-+\eta_+) \left(M-\Sigma ^2\right)}{\Sigma ^2}}+(\eta_\star-\eta_+) e^{\frac{2 (\eta_\star+\eta_+) \left(M-\Sigma ^2\right)}{\Sigma ^2}}}\\&+\eta_\star+\frac{\Sigma ^2}{\Sigma
^2-M}+\eta_+.
\end{align*}
\end{small}
\end{proposition}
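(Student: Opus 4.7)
The strategy is to apply the Itô formula \eqref{eq: Ito max} to the two test functions $f(\eta)=\eta$ and $f(\eta)=\eta^{2}$, divide by $T$, and let $T\to\infty$, using the ergodic theorem \eqref{eq: limit problem} from Lemma \ref{lem: stationary distribution}. For both choices the LHS $f(\eta_T)/T$ vanishes because $\eta$ lives in a bounded interval; the Itô martingale divided by $T$ vanishes almost surely because $f'$ is bounded on $[\eta_-,\eta_+]$, so $\langle\int \Sigma f'(\eta)dW\rangle_T\leq C\,T$ and a Dambis-Dubins-Schwarz time change (or a direct SLLN for martingales) applies. I also need the almost-sure existence of the jump-counting limits $\bar N_L:=\lim N_L(T)/T$ and $\bar N_U:=\lim N_U(T)/T$; this follows from a renewal-reward argument analogous to the one used inside the proof of Lemma \ref{lem: stationary distribution}, since the successive inter-exit times $T_k=\tau_k-\tau_{k-1}$ are i.i.d.~with finite mean and the indicators ``cycle $k$ ends at $\eta_+$'' are i.i.d.~Bernoulli.

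Taking $f(\eta)=\eta$, for which $\mathcal{A}f\equiv M-\Sigma^{2}/2$, the passage to the limit yields the linear relation
\[
(\eta_\star-\eta_-)\bar N_L-(\eta_+-\eta_\star)\bar N_U+(M-\tfrac{\Sigma^{2}}{2})=0.
\]
Taking $f(\eta)=\eta^{2}$, for which $\mathcal{A}f=\Sigma^{2}+2(M-\Sigma^{2}/2)\eta$, the ergodic theorem sends $\frac{1}{T}\int_0^T\eta_s\,ds$ to $\bar\eta:=\int_{\eta_-}^{\eta_+}\xi\,\varphi(\xi)\,d\xi$, and one obtains the second relation
\[
(\eta_\star^{2}-\eta_-^{2})\bar N_L-(\eta_+^{2}-\eta_\star^{2})\bar N_U+\Sigma^{2}+2(M-\tfrac{\Sigma^{2}}{2})\bar\eta=0.
\]
Eliminating $\bar N_L$ by multiplying the first equation by $\eta_\star+\eta_-$ and subtracting from the second gives immediately
\[
\bar N_U=\frac{(M-\Sigma^{2}/2)(2\bar\eta-\eta_--\eta_\star)+\Sigma^{2}}{(\eta_+-\eta_-)(\eta_+-\eta_\star)},
\]
which already has the structure announced in the proposition with $\Theta=2\bar\eta$.

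What remains is to evaluate $\Theta=2\bar\eta$ in closed form using the explicit piecewise-exponential density from Lemma \ref{lem: stationary distribution}. Splitting $\bar\eta=\int_{\eta_-}^{\eta_\star}\xi(a_1+b_1 e^{\alpha\xi})d\xi+\int_{\eta_\star}^{\eta_+}\xi(a_2+b_2 e^{\alpha\xi})d\xi$ and using $\int\xi e^{\alpha\xi}d\xi=\xi e^{\alpha\xi}/\alpha-e^{\alpha\xi}/\alpha^{2}$ reduces everything to rational combinations of the six quantities $e^{\alpha(\eta_i+\eta_j)}$ that already appear in the common denominator of $b_1,b_2$. I expect the main obstacle to be precisely this final algebraic step: the normalization constants $b_1,b_2$ share the three-term exponential denominator, and assembling numerator and denominator into the compact form of $\Theta$ given in the statement requires careful bookkeeping of cancellations across the two pieces of the integral, rather than any further probabilistic input.
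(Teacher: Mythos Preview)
Your proof is correct and is essentially the paper's argument, only unpacked. The paper applies It\^o's formula \eqref{eq: Ito max} once, to the single quadratic
\[
f(\eta)=\frac{\eta(\eta-\eta_--\eta_\star)}{(\eta_+-\eta_\star)(\eta_+-\eta_-)},
\]
which is precisely your linear combination $\eta^{2}-(\eta_-+\eta_\star)\eta$, rescaled so that $f(\eta_\star)-f(\eta_-)=0$ and $f(\eta_+)-f(\eta_\star)=1$. With that choice the $N_L$ term drops out automatically and one reads off $\bar N_U=\lim_{T}\frac1T\int_0^T\mathcal Af(\eta_s)\,ds$, whose existence then comes for free from the ergodic theorem; your two-equation route reaches the identical formula but needs the separate renewal-reward argument for the existence of $\bar N_L$ (and $\bar N_U$) before the elimination step is meaningful. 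Either way the remaining work is the same: identify $\Theta=2\bar\eta=2\int_{\eta_-}^{\eta_+}\xi\,\varphi(\xi)\,d\xi$ and evaluate it from the piecewise-exponential density of Lemma~\ref{lem: stationary distribution}.
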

\begin{proof}
Let $f$ be the quadratic function
\[
f(\eta)=\frac{\eta(\eta-\eta_--\eta_\star)}{(\eta_+-\eta_\star)(\eta_+-\eta_-)}.
\]
As, $f(\eta_+)-f(\eta_\star)=1$, and $f(\eta_\star)-f(\eta_-)=0$, and 
\[
f'(\eta)=\frac{2\eta-\eta_--\eta_\star}{(\eta_+-\eta_-)(\eta_+-\eta_\star)}
\]
we get, by equation \eqref{eq: Ito max}, after division by $T$ and letting $T\rightarrow \infty$,
\[
\lim_{T\rightarrow\infty}\frac{\vert\{s<T: \Delta U_s>0\}\vert}{T}=\lim_{T\rightarrow\infty}\frac{1}{T}\int_0^T\mathcal A f(\eta_s)ds.
\]
It remains to evaluate the almost sure limit on the right side, by applealing to the Ergodic Theorem. Since
\[
\mathcal Af(\eta)=\frac{\Sigma^2}{(\eta_+-\eta_-)(\eta_+-\eta_\star)}+(M-\frac{\Sigma^2}{2})\frac{2\eta-\eta_--\eta_\star}{(\eta_+-\eta_-)(\eta_+-\eta_\star)}
\]
we get
\begin{align*}
\lim_{T\rightarrow\infty}\frac{1}{T}\int_0^T\mathcal A f(\eta_s)ds&=\frac{\Sigma^2-(M-\Sigma^2/2)(\eta_-+\eta_\star)}{(\eta_+-\eta_-)(\eta_+-\eta_\star)}\\&+\frac{2M-\Sigma^2}{(\eta_+-\eta_-)(\eta_+-\eta_\star)}\lim_{T\rightarrow \infty}\frac{1}{T}\int_0^T \eta_s ds.
\end{align*}
By Lemma \ref{lem: stationary distribution}, and using the stationary density \eqref{eq: stationary density maximal trades} of $\eta$, we get $\lim_{T\rightarrow \infty}\frac{1}{T}\int_0^T \eta_s ds=\int_{\eta_-}^{\eta_+}\eta \varphi(\eta)$
and thus \eqref{eq: max jumps exact}.

\end{proof}

We revisit the strategy that keeps the proportion of wealth approximately constant. The proof of the following is similar to that of Proposition \ref{prop: const prop minimal trades}, and thus omitted. Note that we do not give the explicit formula,
which can easily be derived from the previous statements, but would take too much place here. For the same reason, we only provide asymptotics of leading order.
\begin{proposition}\label{prop: bulky trades}
For $\Lambda\neq 0,1$ and $\mu\neq \sigma^2/2$, consider the strategy that buys at $\pi_-<\Lambda$ and sells at $\pi_+>\Lambda$ as much as necessary to bring the proportion of wealth in the risky asset $\pi_t$ back to $\Lambda$. If the trading boundaries are of the form
\begin{equation}\label{eq: ans boundaries pix} 
\pi_\pm=\Lambda \pm A_\pm \varepsilon^{1/3}+O(\varepsilon^{2/3}),
\end{equation}
with $A_\pm>0$, then average transaction costs satisfy for sufficiently small $\varepsilon$:
\begin{enumerate}
\item (Selling Frequency) The long-run selling frequency per unit time satisfies the asymptotics
\[
\text{SF}=\lim_{T\rightarrow\infty}\frac{\vert \{s\leq t: \Delta U_s>0\}}{T}=\frac{\sigma^2 \Lambda^2 (\Lambda-1)^2}{A_+(A_++A_-)}\varepsilon^{-2/3}+O(\varepsilon^{-1/3}).
\]
\item \label{item: 2x ATC} (Average Cost) Average transaction costs satisfy the asymptotics
\begin{equation}\label{eq: ATC GBM exactdl}
\text{ATC}:=\lim_{T\rightarrow\infty}\frac{\varepsilon}{T}\sum_{s\leq t} \pi_{t_-}\frac{\Delta \varphi^\downarrow_t}{\varphi_{t_-}}=\frac{\sigma^2 \Lambda^2 (\Lambda-1)^2}{(A_++A_-)}\varepsilon^{2/3}+O(\varepsilon)
\end{equation}
and thus is, at leading order, twice the cost of the control limit policy (that is, eq. \eqref{eq: ATC GBM asy} of 
Proposition \ref{prop: const prop minimal trades}).
\end{enumerate}

\end{proposition}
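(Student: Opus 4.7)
The plan is to reduce this to a direct application of \eqref{eq: max jumps exact}, by the same coordinate change used in Proposition \ref{prop: const prop minimal trades}. Passing from the wealth fraction $\pi_t$ to the risky–safe ratio $\zeta_t = \pi_t/(1-\pi_t)$, the dynamics \eqref{eq zeta diff1} identify $\zeta$ as a resetted GBM of the form \eqref{eq: fixed jumps zeta} with $M=\mu$, $\Sigma=\sigma$, buy/sell/reset levels $\zeta_\pm = \pi_\pm/(1-\pi_\pm)$ and $\zeta_\star = \Lambda/(1-\Lambda)$. Its logarithm $\eta_t = \log\zeta_t$ is the resetted Brownian motion \eqref{eq: eta with jumps}, to which the counting formula \eqref{eq: max jumps exact} applies directly. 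Taylor-expanding $\log(x/(1-x))$ around $x=\Lambda$ gives
\begin{equation}
\eta_\pm - \eta_\star = \pm\frac{A_\pm}{\Lambda(1-\Lambda)}\varepsilon^{1/3} + O(\varepsilon^{2/3}),
\end{equation}
so that the denominator $(\eta_+-\eta_-)(\eta_+-\eta_\star)$ appearing in \eqref{eq: max jumps exact} is of order $\varepsilon^{2/3}$.

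For part (i), the claim will follow once the numerator of \eqref{eq: max jumps exact} is shown to approach $\Sigma^2$ as $\varepsilon\downarrow 0$. The $\Sigma^2$ summand is already present, so I need to verify that the remaining two terms $-(\eta_-+\eta_\star)(M-\Sigma^2/2)$ and $\tfrac12(2M-\Sigma^2)\Theta$ cancel at leading order. A formal Taylor expansion in $\delta=\varepsilon^{1/3}$ of the exponentials in $\Theta$ shows that the numerator and denominator of the fractional part of $\Theta$ vanish at orders $\delta^0$ and $\delta^1$, while their leading $\delta^3$ coefficients combine to produce the finite limit $\Sigma^2/(M-\Sigma^2)$; added to $\eta_\star + \Sigma^2/(\Sigma^2-M) + \eta_+$ this yields $\Theta = 2\eta_\star + O(\varepsilon^{1/3})$. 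The symmetric combination $-2\eta_\star(M-\Sigma^2/2) + (M-\Sigma^2/2)\cdot 2\eta_\star$ then annihilates the $O(1)$ piece and the numerator reduces to $\Sigma^2 + O(\varepsilon^{1/3})$, producing
\begin{equation}
\text{SF} = \frac{\sigma^2}{(\eta_+-\eta_-)(\eta_+-\eta_\star)} + O(\varepsilon^{-1/3}) = \frac{\sigma^2\Lambda^2(\Lambda-1)^2}{A_+(A_++A_-)}\varepsilon^{-2/3} + O(\varepsilon^{-1/3}).
\end{equation}

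For part (ii), I note that every sell event resets $\pi$ from $\pi_+$ to $\Lambda$, so the jump form \eqref{eq pi diff1} (with $\Delta\varphi^\uparrow=0$) yields the deterministic per-event contribution $\pi_{s_-}\Delta\varphi^\downarrow_s/\varphi_{s_-} = (\pi_+-\Lambda)/(1-\varepsilon\pi_+)$. Summing over sells and dividing by $T$, using the definition of SF together with the expansions $\pi_+-\Lambda = A_+\varepsilon^{1/3}+O(\varepsilon^{2/3})$ and $1-\varepsilon\pi_+ = 1+O(\varepsilon)$, one obtains
\begin{equation}
\text{ATC} = \frac{\varepsilon(\pi_+-\Lambda)}{1-\varepsilon\pi_+}\,\text{SF} = \frac{\sigma^2\Lambda^2(\Lambda-1)^2}{A_++A_-}\varepsilon^{2/3} + O(\varepsilon),
\end{equation}
which is \eqref{eq: ATC GBM exactdl}. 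Comparison with the leading term in \eqref{eq: ATC GBM asy} shows that this is exactly twice the cost of the minimal-trade policy, the $A_+$ arising from $\pi_+-\Lambda$ cancelling the $A_+$ in the denominator of SF to yield a formula symmetric in $A_\pm$.

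The main obstacle is the cancellation in part (i): \eqref{eq: max jumps exact} expresses SF as the quotient of an $O(1)$ numerator over an $O(\varepsilon^{2/3})$ denominator, and the specific leading constant depends on a non-obvious cancellation obtained by expanding $\Theta$ as a formal power series in $\delta=\varepsilon^{1/3}$ and verifying the identity $\Theta = 2\eta_\star + O(\delta)$. Once this bookkeeping is carried out, both parts of the statement reduce to elementary substitutions parallel to those in the proof of Proposition \ref{prop: const prop minimal trades}.
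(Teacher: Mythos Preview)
Your proposal is correct and follows precisely the route the paper indicates (the paper omits the proof, pointing to Proposition~\ref{prop: const prop minimal trades} and ``the previous statements'', i.e.\ the machinery of Section~\ref{sec: maximal trades}): pass to $\zeta_t=\pi_t/(1-\pi_t)$, identify it as a resetted GBM with $M=\mu$, $\Sigma=\sigma$ and reset level $\zeta_\star=\Lambda/(1-\Lambda)$, then apply \eqref{eq: max jumps exact} and expand. Two cosmetic points: in your $\Theta$ expansion both numerator and denominator of the fraction in fact vanish through order $\delta^2$ (not just $\delta^0,\delta^1$), with leading $\delta^3$ coefficients $\beta ab(b-a)$ and $\tfrac{\beta^2}{2}ab(b-a)$ giving the ratio $2/\beta=\Sigma^2/(M-\Sigma^2)$ you state; and the exact per-sell contribution is $(\pi_+-\Lambda)/(1-\varepsilon\Lambda)$ rather than $(\pi_+-\Lambda)/(1-\varepsilon\pi_+)$ (the linearized \eqref{eq pi diff1} is only first-order in the jump), but since both denominators are $1+O(\varepsilon)$ this does not affect the claimed orders.
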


\subsubsection{Small Trades}\label{sec: micro trades}

The cumulative number of shares bought or sold lacks absolute continuity, necessitating approximation for real-world trade applications through trades of fixed size. However, executing discrete trades toward the midpoint of the no-trade region proves suboptimal, incurring twice the transaction costs compared to control limit policies (Proposition \ref{prop: bulky trades} \ref{item: 2x ATC}). Using the methods developed in this section, a similar outcome can be demonstrated for trades of any order comparable to the size of the no-trade region (approximately $\varepsilon^{1/3}$, as indicated in Remark \ref{rem: moderate trades}).

This prompts the question of determining the optimal trade size for a satisfactory approximation of the optimal trading policy. In the computation of Average Trading Costs, a crucial factor is the stationary density of the resetted or reflected diffusions (e.g., $\pi_t$). Hence, smaller trades contribute to a better approximation of the stationary density of the control limit policy, as illustrated in Figure \ref{fig: approaching control limit}.
\begin{figure}
\centering
\includegraphics[width=0.6\textwidth]{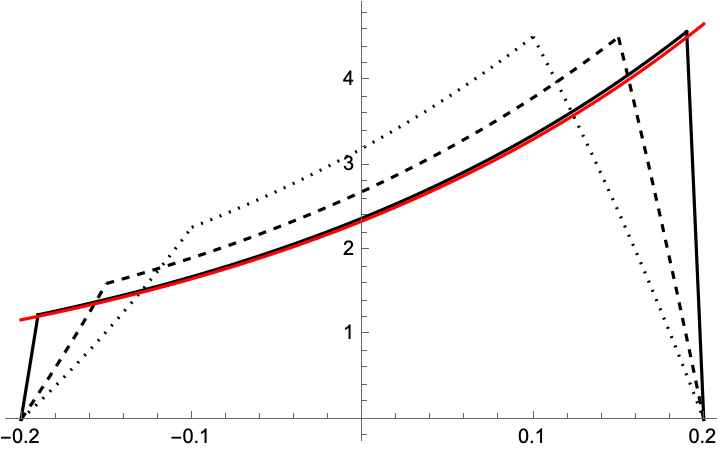}
\caption{The stationary densities converge, for smaller and smaller trades, to the stationary density of a reflected diffusion (red line). Here $\eta_\pm=\pm 0.2$, and
$\eta_{\pm,\star}=\pm 0.1$ (dotted line) and $\eta_{\pm,\star}=\pm 0.15$ (dashed line) and $\eta_{\pm,\star}=\pm 0.19$ (solid line). The parameters are $\mu=10\%$ and $\sigma=15\%$.}
\label{fig: approaching control limit}
\end{figure}
In this section, we substantiate this intuition by approximating the somewhat abstract reflected diffusion discussed in Section \ref{sec: minimal trades}. We achieve this by employing adequately small trades, ensuring that transaction costs align at the leading order with the control limit policy (Proposition \ref{prop: const prop minimal trades}).

This section is analogous to the previous one, with the modification that we trade from $\xi_-$ to $\xi_{-\star}$, and from $\xi_+$ to $\xi_{+,\star}$, where
\[
\xi_-<\xi_{-,\star}<\xi_{+,\star}<\xi_+.
\]
Define again $\eta=\log(\xi)$, then for any $C^2$ function $f$, we get the dynamics
\begin{align}\label{eq: Ito max1}
f(\eta_t)&=f(\eta_0)+\int_0^t \sigma f'(\eta_s)dW_s+\int_0^t \mathcal A f(\eta_s)ds\\\nonumber&+(f(\eta_{-,\star})-f(\eta_-))\vert \{s\leq t: \Delta L_s>0\}-(f(\eta_+)-f(\eta_{+,\star}))\vert \{s\leq t: \Delta U_s>0\}.
\end{align}
To extract the down-wards jumps $\Delta U_s$, we use $f(\eta)=(\eta-\eta_-)(\eta-\eta_{-,\star})$, for which $f(\eta_-)=f(\eta_{-,\star})$, and
\[
f(\eta_+)-f(\eta_{+,\star})=(\eta_+-\eta_-)(\eta_+-\eta_{-,\star})-(\eta_{+,\star}-\eta_-)(\eta_{+,\star}-\eta_{-,\star})>0.
\]
Furthermore,
\[
\mathcal Af(\eta)=\Sigma^2+(M-\Sigma^2/2)(2\eta-\eta_--\eta_{-,\star}).
\]
By dividing \eqref{eq: Ito max1} by $T$ and letting $T\rightarrow\infty$, we get
\begin{equation}\label{eq: superhit}
\lim_{T\rightarrow\infty}\frac{\vert \{s\leq t: \Delta U_s>0\}\vert}{T}=\frac{\lim_{T\rightarrow\infty}\frac{1}{T}\int_0^T\mathcal Af(\eta_s)ds}{(\eta_+-\eta_-)(\eta_+-\eta_{-,\star})-(\eta_{+,\star}-\eta_-)(\eta_{+,\star}-\eta_{-,\star})},
\end{equation}
a limit which exists almost surely, and can be computed by appealing to a suitably tailored ergodic theorem, knowing the stationary density. (The proof is similar to the one in Section \ref{sec: maximal trades}, but results in more tedious computations and longer proofs and thus is omitted.) A simulation in Figure \ref{fig: simulating density small bulk trades} demonstrates satisfactory agreement of the empirically computed stationary density of $\eta_T$ ($T=5$ years) with its exact counterpart $\eta$.

The sizes of the no-trade regions in all our examples are proportional to $\varepsilon^{1/3}$, thus sufficiently small trades must be of second order. We summarize trading frequency and
long-run transaction costs:\footnote{The majority of computations was performed by Wolfram Mathematica, using \eqref{eq: superhit}.}
\begin{proposition}\label{prop: micro trades}
For $\Lambda\neq 0,1$ and $\mu\neq \sigma^2/2$, consider the strategy that buys at $\pi_-<\Lambda$ resp.~ sells at $\pi_+>\Lambda$ as much as necessary to bring the proportion of wealth in the risky asset $\pi_t$ to $\pi_{+,\star}<\pi_+$ resp. ~ $\pi_{-\star}>\pi_-$. If the trading boundaries are of the form \eqref{eq: ans boundaries pix} with $A_\pm>0$, and if 
\begin{equation}\label{eq: tra bou best proxy}
\pi_{\pm,\star}=\pi_\pm\mp \kappa_\pm \varepsilon^{2/3}+O(\varepsilon),
\end{equation}
with $\kappa_\pm>0$, then or sufficiently small $\varepsilon$:
\begin{enumerate}
\item (Selling Frequency) The long-run selling frequency per unit time satisfies the asymptotics
\[
\text{SF}=\lim_{T\rightarrow\infty}\frac{\vert \{s\leq t: \Delta U_s>0\}\vert}{T}=\frac{\sigma^2 \Lambda^2 (\Lambda-1)^2}{2(A_++A_-)\kappa_+}\varepsilon^{-1}+O(\varepsilon^{-2/3}).
\]
\item (Average Cost) Average transaction costs satisfy the asymptotics
\begin{equation}\label{eq: ATC GBM exactdl}
\text{ATC}:=\lim_{T\rightarrow\infty}\frac{\varepsilon}{T}\sum_{s\leq t} \pi_{t_-}\frac{\Delta \varphi^\downarrow_t}{\varphi_{t_-}}=\frac{\sigma^2 \Lambda^2 (\Lambda-1)^2}{2(A_++A_-)}\varepsilon^{2/3}+O(\varepsilon)
\end{equation}
and thus agrees, at leading order, with the cost of the control limit policy (that is, eq. \eqref{eq: ATC GBM asy} of 
Proposition \ref{prop: const prop minimal trades}).
\end{enumerate}
\end{proposition}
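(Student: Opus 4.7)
The plan is to imitate the scheme of Proposition \ref{prop: bulky trades}, adapted to a resetted diffusion with two \emph{interior} reset points. First I would pass to the risky-safe ratio and take logarithms, setting $\eta_t := \log(\pi_t/(1-\pi_t))$, so that $\eta$ is a Brownian motion with drift $\mu-\sigma^2/2$ and volatility $\sigma$, reset from $\eta_\pm := \log(\pi_\pm/(1-\pi_\pm))$ to $\eta_{\pm,\star} := \log(\pi_{\pm,\star}/(1-\pi_{\pm,\star}))$. Writing $a := \eta_+-\eta_-$ for the log-width of the no-trade region and $b_\pm := |\eta_\pm - \eta_{\pm,\star}|$ for the log-jump sizes, a first-order Taylor expansion of $\eta(\pi)$ at $\pi=\Lambda$ combined with the Ans\"atze \eqref{eq: ans boundaries pix}, \eqref{eq: tra bou best proxy} gives
\[
a = \frac{A_++A_-}{\Lambda(1-\Lambda)}\varepsilon^{1/3} + O(\varepsilon^{2/3}),\qquad b_\pm = \frac{\kappa_\pm}{\Lambda(1-\Lambda)}\varepsilon^{2/3} + O(\varepsilon).
\]

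Next I would apply identity \eqref{eq: superhit} with the quadratic test function $f(\eta)=(\eta-\eta_-)(\eta-\eta_{-,\star})$. Since $f(\eta_-)=f(\eta_{-,\star})=0$, the upward-jump contribution cancels, and a direct expansion gives the denominator
\[
(\eta_+-\eta_-)(\eta_+-\eta_{-,\star}) - (\eta_{+,\star}-\eta_-)(\eta_{+,\star}-\eta_{-,\star}) = b_+(2a-b_+-b_-) = 2ab_+ + O(\varepsilon^{4/3}),
\]
while $\mathcal A f(\eta) = \sigma^2 + (\mu - \sigma^2/2)(2\eta-\eta_--\eta_{-,\star})$.

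The hardest step is the ergodic limit $\lim_{T\to\infty}\frac{1}{T}\int_0^T \mathcal A f(\eta_s)\,ds$, which requires the stationary density of the resetted diffusion with two interior reset points. This is the direct analog of Lemma \ref{lem: stationary distribution}: the density is piecewise of the form $a_i + b_i e^{\alpha\eta}$ on the three subintervals $(\eta_-,\eta_{-,\star})$, $(\eta_{-,\star},\eta_{+,\star})$, $(\eta_{+,\star},\eta_+)$, its six coefficients being pinned by continuity at $\eta_{\pm,\star}$, vanishing at $\eta_\pm$, and unit mass; the ergodic theorem itself is justified by the same renewal argument as in Lemma \ref{lem: stationary distribution}, with the (heavier) algebra deferred to Mathematica. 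Since $[\eta_-,\eta_+]$ shrinks uniformly to $\log(\Lambda/(1-\Lambda))$, the stationary mean satisfies $|2\mathbb E_\infty[\eta] - \eta_- - \eta_{-,\star}| = O(\varepsilon^{1/3})$, so the numerator equals $\sigma^2 + O(\varepsilon^{1/3})$. Dividing by the denominator and substituting the $\pi$-expansions for $a$ and $b_+$ yields part (i). For (ii), the per-trade sell cost $\varepsilon \pi_{t_-}(\Delta\varphi^\downarrow_t/\varphi_{t_-}) = -\varepsilon\Delta\pi_t/(1-\varepsilon\pi_{t_-}) = \kappa_+ \varepsilon^{5/3} + O(\varepsilon^{8/3})$ by \eqref{eq pi diff1} and \eqref{eq: tra bou best proxy}; multiplying by the selling frequency from (i) produces the claimed $\text{ATC}$, which agrees at leading order with that of Proposition \ref{prop: const prop minimal trades}.
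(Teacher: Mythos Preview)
Your proposal is correct and follows essentially the same route as the paper: the paper sets up identity \eqref{eq: superhit} with the quadratic test function $f(\eta)=(\eta-\eta_-)(\eta-\eta_{-,\star})$, appeals to the analogue of Lemma~\ref{lem: stationary distribution} for the three-piece stationary density, and then states that the asymptotic expansions were carried out in Mathematica---your write-up is precisely the explicit bookkeeping behind that computation. One cosmetic slip: in part~(ii) the remainder in the per-trade cost should be $O(\varepsilon^{2})$ rather than $O(\varepsilon^{8/3})$ (the $O(\varepsilon)$ term in the Ansatz \eqref{eq: tra bou best proxy} propagates), but this does not affect the stated leading-order $\text{ATC}$ or its $O(\varepsilon)$ error.
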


\begin{figure}
\centering
\includegraphics[width=0.6\textwidth]{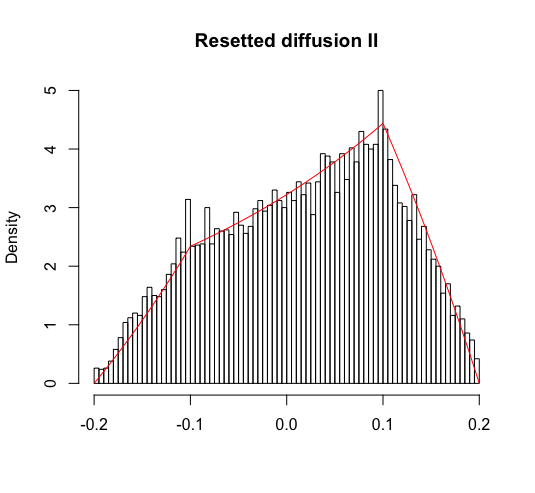}
\caption{Small trades of fixed size: For $M-\Sigma^2/2=10\%$ and $\Sigma=25\%$, and $\eta_-=-.2$, $\eta_l=-0.1$, $\eta_u=0.1$ and $\eta_+=+0.2$, we plot the stationary density of the resetted process
$\eta$. The histogram gives the value of $\eta_T$ for $T=5$ years, from ten thousand simulated path (daily frequency). }
\label{fig: simulating density small bulk trades}
\end{figure}

\begin{remark}\label{rem: moderate trades}
If, instead of \eqref{eq: tra bou best proxy}, one would use significantly larger trades, to
\[
\pi_{\pm,\star}=\pi_\pm\mp \kappa A_\pm \varepsilon^{1/3}+O(\varepsilon), \quad 0\leq\kappa<1,
\]
then selling frequency is given by
\[
\text{SF}=\lim_{T\rightarrow\infty}\frac{\vert \{s\leq t: \Delta U_s>0\}\vert}{T}=\frac{\sigma^2 \Lambda^2 (\Lambda-1)^2}{A_+(A_++A_-)(1-\kappa^2)}\varepsilon^{-2/3}+O(\varepsilon^{-1/3})
\]
and transaction costs are
\[
\text{ATC}:=\lim_{T\rightarrow\infty}\frac{\varepsilon}{T}\sum_{s\leq t} \pi_{t_-}\frac{\Delta \varphi^\downarrow_t}{\varphi_{t_-}}=\frac{\sigma^2 \Lambda^2 (\Lambda-1)^2}{(A_++A_-)(1+\kappa)}\varepsilon^{2/3}+O(\varepsilon).
\]
For $\kappa=1$, we recover the trades to some point $\Lambda$ in the no-trade region (Proposition \ref{prop: bulky trades}), and for any $\kappa\in (0,1)$, we get ATC slightly larger at leading order than the original one. This means that
such trading strategies do never match the trading costs of the control limit policy at the relevant, second order.

\end{remark}

\subsection{Tracking Frictionless Targets}\label{sec: tracking}
\subsubsection{Levered Funds}\label{sec: LETFS}
The manager of a Leveraged (or Inverse) Exchange Traded Fund (LIETF) has to maintain a constant exposure to a constant multiple $\Lambda$ of the benchmark's $S$ return with dynamics \eqref{eq: evolutionstheorie}.
For this section we take the conservative view that annualized average excess returns $\mu$ of $S$ vanish, which is line with the objective of a levered ETF manager to track a multiple of a benchmark's return rather than outperform it. \footnote{Besides, this assumption yields explicit formulas. Risk-premia, stochastic interest rates and stochastic volatility are included in the more general theory of performance evaluation of LIETFs in \cite{gm2023}.}

Due to the fact that frequent trading incurs transaction costs it is impossible to trade by perfectly matching the target leverage -- such a strategy would lead to immediate
bankruptcy, as it is of infinite variation. To measure the discrepancy between fund's excess returns (with wealth $w$, trading strategy $\varphi$), and the multiple $\Lambda$ of the underlying's excess return (the target), we use the (cumulative) difference $D_t$
\begin{equation}\label{cum diff}
dD_t = \frac{dw_t}{w_t} - r dt -\Lambda \left( \frac{dS_t}{S_t} - r dt \right),
\qquad D_0 = 0.
\end{equation}
The most relevant statistics of $D$, Tracking distance (TrD) and Tracking Error (TrE) are defined by
\begin{align}\label{eq: ExR}
\text{TrD}:=& \lim\sup_{T\rightarrow \infty}
\frac{1}{T}\mathbb E\left[\lim_{\Delta t\rightarrow 0}\sum_{k=1}^{\lfloor T/\Delta t \rfloor} (D_{k\Delta t} -D_{(k-1)\Delta t})\right]
=-\varepsilon \lim\inf_{T\rightarrow \infty}\frac{1}{T}\mathbb E\left[\int_0^T\pi_t\frac{d\varphi^\downarrow_t}{\varphi_t}\right], \\\label{eq: TrE}
\,\,\,\,\,\,
\text{TrE} :=& \,\,\,\, = \lim\inf_{T\rightarrow \infty}\frac{1}{T}
\left(\mathbb E\left[\lim_{\Delta t\rightarrow 0}\sum_{k=1}^{\lfloor T/\Delta t \rfloor} (D_{k\Delta t} -D_{(k-1)\Delta t})^2\right]\right)^{1/2}\,\,\,\,
\\&= \lim\inf_{T\rightarrow\infty}\left(\mathbb E\left[\frac1T \int_0^\infty (\pi_t-\Lambda)^2 \sigma_t^2 dt\right]\right)^{1/2} ,
\end{align}
with quadratic variation of $D$ being $\langle D\rangle_T$. 

The explicit expression on the right side of \eqref{eq: ExR} reveals that negative Tracking distance is equal to Average Tracking Costs (ATC) (cf.~\cite{gm2020}) . Thus, managing an LIETF means to find an optimal trade-off between Tracking Distance TrD (or ATC) and Tracking Error TrE (for a more detailed motivation of this objective, cf.~\cite{gm2023}).

Maximizing Tracking Distance for a fixed Tracking Error can be summarized in minimizing the long-run Equivalent Expense Ratio (EER)
\begin{equation}\label{eq: obj asympt}
\text{EER}(\varphi) :=\frac{\gamma}{2}\tracking^2-\text{TrD} =\liminf_{T\rightarrow\infty} 
\frac{1}{T} \mathbb E\left[\frac{\gamma}{2}\int_0^T (\pi_t -\Lambda)^2\sigma^2 dt
+\varepsilon\int_0^T\pi_t\frac{d\varphi^\downarrow_t}{\varphi_t} 
\right],
\end{equation}
where the Lagrange multiplier $\gamma>0$ is interpreted as aversion to tracking error.

By \cite[Theorem 3.1]{gm2023}, for sufficiently small $\varepsilon$, the problem
\[
\max \text{EER}(\varphi)
\]
over all admissible strategies $\varphi\in \Phi$, is well posed: There exists $0<\pi_-<\pi_+<\infty$ such that the trading strategy $\hat\varphi$ that buys at $\pi_- $ and sells at $\pi_+ $ to keep the risky weight $\pi_t$ within the interval $[\pi_-,\pi_+]$ is optimal. Furthermore, the minimum \emph{Equivalent Expense Ratio} is 
\begin{equation*}
\text{EER}=\frac{\gamma\sigma^2}{2}(\pi_- -\Lambda)^2,
\end{equation*}
the optimal \emph{Tracking Difference} and \emph{Tracking Error} are 
\begin{align}\label{eq1sup}
\tdiff &:=\lim_{T\rightarrow\infty}\frac{D_T}T = -\frac{\sigma^2}{2}\frac{\pi_-\pi_+(\pi_+-1)^2}{(\pi_+-\pi_-)(1/\varepsilon-\pi_+)},\\
\label{eq: Tr.E.}
\tracking &:= \lim_{T\rightarrow\infty} \sqrt{\frac{\langle D\rangle_T}T }=
\sigma\sqrt{(\pi_--\Lambda)^2-\gamma\frac{\pi_-\pi_+(\pi_+-1)^2}{(\pi_+-\pi_-)(1/\varepsilon-\pi_+)}}.
\end{align}

The trading boundaries have the series expansion
\begin{equation}\label{as multiplier piminust}
\pi_\pm = \Lambda 
\pm\left(\frac{3}{4\gamma}\Lambda^2(\Lambda-1)^2\right)^{1/3}\varepsilon^{1/3} 
-\frac{\Lambda}{\gamma}\left(\frac{\gamma \Lambda (\Lambda-1)}{6}\right)^{1/3}\varepsilon^{2/3} +O(\varepsilon).
\end{equation}
In this section, we put forward an approximation method that allows to obtain the above trading boundaries and performance measures, at high precision. First, we study the shadow market approach, and second, the associated HJB equations.
\subsubsection*{The Shadow Market}\label{sec: shadow LETF}
We assume that a Shadow price is a continuous semimartingale with absolutely continuous differential characteristics, that is,
\[
\frac{d\widetilde S_t}{\widetilde S_t}=rdt+\widetilde\mu_tdt+\widetilde \sigma_t dB_t.
\]

In the shadow market, the cumulative tracking difference $\widetilde D_t$ is of the form
\[
d\widetilde D_t = \frac{d\widetilde F_t}{\widetilde F_t} - r dt -\Lambda \left( \frac{d\widetilde S_t}{\widetilde S_t} - r dt \right),
\]
which results in the shadow equivalent expense ratio
\[
\widetilde {\text{EER}}(\varphi)=\liminf_{T\rightarrow\infty} \left(
\frac{1}{T} \mathbb E\left[\frac{\gamma}{2}\int_0^T (\widetilde\pi_t -\Lambda)^2\widetilde \sigma_t^2 dt\right]-\frac{1}{T} \mathbb E\left[\int_0^T (\widetilde\pi_t -\Lambda)\widetilde \mu_t^2 dt\right]\right).
\]

This objective is optimized by trading so that, at all times, the fraction of wealth in $\widetilde S$ equals the Merton fraction 
\begin{equation}\label{eq: merton shadow}
\widetilde \pi_t=\Lambda+\frac{\widetilde \mu_t}{\gamma \widetilde\sigma_t^2}.
\end{equation}
Assume a shadow price of the form $\widetilde S_t/S_t=g(\pi_t)$, where the function $g$ is assumed to be twice differentiable, satisfying $g(\pi_-)=1$, $g(\pi_+)=(1-\varepsilon)$, and the smooth pasting condition $g'(\pi_\pm)=0$. This implies that
\begin{equation}\label{sh 2x}
\widetilde{\mu}_t=\frac{g'(\pi_t)(\pi_t(1-\pi_t)^2\sigma^2)+\frac{1}{2}g''(\pi_t)\pi_t^2(1-\pi_t)^2\sigma^2}{g}
\end{equation}
and
\begin{equation}\label{eq: diff coefx}
\widetilde{\sigma}_t=(\sigma g+g'(\pi)\pi(1-\pi)\sigma)/g.
\end{equation}
The proportion of wealth in the shadow asset is
\begin{equation}\label{eq: shadow prop}
\widetilde \pi_t=\frac{g(\pi_t) \pi_t}{g(\pi_t) \pi_t+(1-\pi_t)}.
\end{equation}
By equating \eqref{eq: shadow prop} with \eqref{eq: merton shadow}, using \eqref{sh 2x}--\eqref{eq: diff coefx}, we obtain the following
free boundary problem for $g$,
\begin{equation}\label{eq: fbp psi1}
\frac{1}{2} g'' \pi^2(1-\pi)^2+g' \pi (1-\pi)^2= \gamma (g+g' \pi (1-\pi))^2\left(\frac{g \pi }{g \pi+(1-\pi)}-\Lambda\right),
\end{equation}
subject to \eqref{eq: shadow constraints}--\eqref{eq: smooth pasting}.

In view of formula \eqref{as multiplier piminust} and the results in \cite{em_1_2024}, we conjecture that the trading boundaries agree for small transaction costs with \eqref{as multiplier piminus} at first order; furthermore, we anticipate a minor discrepancy in the second order terms. Therefore, we conjecture that
the free boundaries are of the form
\begin{equation}\label{eq: shadow trading boundaries}
\widetilde\pi_\pm = \Lambda 
\pm\left(\frac{3}{4\gamma}\Lambda^2(\Lambda-1)^2\right)^{1/3}\varepsilon^{1/3} 
-\kappa_\pm\frac{\Lambda}{\gamma}\left(\frac{\gamma \Lambda (\Lambda-1)}{6}\right)^{1/3}\varepsilon^{2/3} +c_\pm \varepsilon+O(\varepsilon^{4/3})
\end{equation}
with unknown constant $c_\pm$ and multiplier $\kappa_\pm$. Furthermore,
being depraved of an explicit solution of the ODE \eqref{eq: fbp psi1}, we make an Ansatz for a polynomial approximation
\begin{align}\label{eq: proxy g}
g(z)&=\left(1 + C_- (z - \widetilde\pi_-)^3 + D_- (z - \widetilde\pi_-)^4 + 
E_- (z - \widetilde\pi_-)^5)\right) \frac{z - \widetilde\pi_+}{\widetilde\pi_- - \widetilde\pi_+}\\&+ \left(1 - \delta^3
+ C_+ (z - \widetilde\pi_+)^3 + D_+ (z - \widetilde\pi_+)^4 + 
E_+ (z - \widetilde\pi_+)^5)\right) \frac{z - \widetilde\pi_-}{\widetilde\pi_+ - \widetilde\pi_-} \\\nonumber&+ 
M_0 (z - \widetilde\pi_-)^2 (z - \widetilde\pi_+)^2 + 
M_+ (z - \widetilde\pi_-) (z - \widetilde\pi_+)^3 + M_- (z - \widetilde\pi_-)^3 (z - \widetilde\pi_+) \\\nonumber&+ 
G_+ (z - \widetilde\pi_-) (z - \widetilde\pi_+)^2 + G_- (z - \widetilde\pi_+) (z - \widetilde\pi_-)^2 \\\nonumber&+ 
L (z - \widetilde\pi_-) (z - \widetilde\pi_+)^4 + L (z - \widetilde\pi_-)^4 (z - \widetilde\pi_+),
\end{align}
where the choice $\delta=\varepsilon^{1/3}$, acknowledges that the expansions of trading boundaries \eqref{eq: shadow trading boundaries}
are formal power series in $\delta$. 

Next, we sketch how to identify the unknown constants in \eqref{eq: shadow trading boundaries} and \eqref{eq: proxy g}. By construction, $g(\widetilde\pi_-)=1$ and $g(\widetilde\pi_+)=1-\varepsilon$ at third order. To satisfy $g'(\widetilde \pi_+)=0$ at second order, we must set
\[
G_+=G_-
\]
and
\[
C_-=-\frac{\gamma-6 G_+(\Lambda-1)^2\Lambda^2}{6(\Lambda -1)^2\Lambda^2}
\]
and for having $g'(\widetilde \pi_-)=0$ up to second order, we need to insist on $C_+=C_-$. Next,
if $g$ satisfies the ODE \eqref{eq: fbp psi1} at $\widetilde\pi_\pm$, we get the coefficients $D_\pm$
(lengthy expressions in terms of $\kappa_\pm$, $M_0$, $M_\pm$) and they satisfy 
\[
D_+-D_-=-\frac{3}{4}(M_+-M_-).
\]
If we insist on $g'(\widetilde\pi_\pm)=0$ at third order, we get $M_\pm$ as functions of
$M_\mp$, $M_0$, and $\kappa_\pm$. At the moment, the ODE is satisfied at second order only at
$\widetilde \pi_\pm$. We finally insist that the ODE is asymptotically satisfied at any $z$ in the interval, which can be represented as a convex combination of
the free boundaries,
\[
z=\theta \widetilde \pi_-+(1-\theta)\widetilde \pi_+,\quad 0\leq \theta\leq 1.
\]
In the interior of the interval, $g$ satisfies the ODE at second order, only if 
\[
\kappa_\pm=-1.
\]
Furthermore, to have the ODE satisfied at third order, we get \[c_++c_-=\Lambda(1-\Lambda),\]
\[
E_+=E_-=\frac{\gamma(-99+2(153+33\gamma(\Lambda-1)-137\Lambda)\Lambda)}{360\Lambda^4(\Lambda-1)^4},
\]
and
\[
L_+=\frac{\gamma(-9+(30+9\gamma(\Lambda-1)-29\Lambda)\Lambda)}{36\Lambda^4(\Lambda-1)^4},
\]
\[
D_-=M_0-\frac{\gamma(3+(\Lambda(\kappa_+-2(3+\kappa_-))}{6\Lambda^3(\Lambda-1)^3}
\]
and, similarly if $g$ satisfies the ODE \eqref{eq: fbp psi1} at $\pi_+$, we get $D_+=D_-$. These computations are summarized in the following result:
\begin{proposition}
The optimal strategy in the market of the frictionless asset $\widetilde S$ is a control limit policy for some trading boundaries $\widetilde \pi_\pm$. For sufficiently small $\varepsilon$, they satisfy the asymptotic expansion
\begin{equation}\label{tata}
\widetilde\pi_\pm = \Lambda 
\pm\left(\frac{3}{4\gamma}\Lambda^2(\Lambda-1)^2\right)^{1/3}\varepsilon^{1/3} 
+\frac{\Lambda}{\gamma}\left(\frac{\gamma \Lambda (\Lambda-1)}{6}\right)^{1/3}\varepsilon^{2/3} +O(\varepsilon^{4/3})
\end{equation}
and thus agree with the optimal policy in the original market\eqref{as multiplier piminust}
only at first order. \footnote{A comparison with \eqref{as multiplier piminust} shows that signs of the second order coefficients are switched, which is analogous to the sign switch in the shadow optimal policy for local mean-variance, cf. \eqref{as multiplier piminus} and the discussion thereafter.}
\end{proposition}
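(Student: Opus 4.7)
The plan is to follow the derivation already sketched in the paragraphs immediately preceding the statement, organising it as a systematic matching of a polynomial Ansatz against a power series in $\delta := \varepsilon^{1/3}$. The object to solve for is the function $g$ of the free boundary problem \eqref{eq: fbp psi1} subject to the universal conditions \eqref{eq: shadow constraints}--\eqref{eq: smooth pasting}; the unknowns are the polynomial coefficients in \eqref{eq: proxy g} together with the second-order coefficients $\kappa_\pm$ and $c_\pm$ of the Ansatz \eqref{eq: shadow trading boundaries}. Because the leading order $\varepsilon^{1/3}$ in \eqref{eq: shadow trading boundaries} is already fixed by matching with the classical asymptotics of \cite{em_1_2024} (and is reproduced in \eqref{as multiplier piminust}), the substantive task is to pin down $\kappa_\pm$ and the next-order constants $c_\pm$.

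\textbf{Order of operations.} First, I would substitute \eqref{eq: proxy g} into the boundary conditions $g(\widetilde\pi_\pm)=1,1-\varepsilon$ and $g'(\widetilde\pi_\pm)=0$, and expand in $\delta$. The two value conditions are already satisfied by construction once $\delta^3 = \varepsilon$, and the smooth-pasting conditions at second order force the identifications $G_+=G_-$ and $C_\pm = -\bigl(\gamma-6G_+(\Lambda-1)^2\Lambda^2\bigr)/\bigl(6(\Lambda-1)^2\Lambda^2\bigr)$ already reported in the preamble. Second, I would substitute the Ansatz into the ODE \eqref{eq: fbp psi1} and evaluate at $\widetilde\pi_\pm$; this determines $D_\pm$ in terms of the remaining unknowns and yields the relation $D_+-D_-=-\tfrac{3}{4}(M_+-M_-)$. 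Third, imposing $g'(\widetilde\pi_\pm)=0$ at the next order solves for $M_\pm$ in terms of $M_\mp$, $M_0$ and $\kappa_\pm$.

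\textbf{The key step.} The decisive step -- and the one that nails down the sign flip in the second-order coefficient -- is to demand that the ODE \eqref{eq: fbp psi1} be satisfied asymptotically \emph{throughout} the no-trade interval, not only at the two endpoints. Parametrising the interior as $z=\theta\widetilde\pi_-+(1-\theta)\widetilde\pi_+$ with $\theta\in[0,1]$, I would expand $F(g,g',g'')$ at $z$ in powers of $\delta$, and collect the coefficient of the $\delta^2$ term as a polynomial in $\theta$. Vanishing of this polynomial uniformly in $\theta$ forces $\kappa_+=\kappa_-=-1$, which is precisely the sign reversal claimed in \eqref{tata}. Repeating the analysis at the next order yields $c_++c_-=\Lambda(1-\Lambda)$ together with the explicit expressions for $E_\pm$, $L_\pm$ and $D_-$ listed in the preamble, confirming that the Ansatz is consistent and fixes the expansion up to the stated $O(\varepsilon^{4/3})$ remainder.

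\textbf{Main obstacle.} The conceptual content is routine once the Ansatz is set, but the computational obstacle is the bookkeeping of the $\delta$-expansion of \eqref{eq: fbp psi1} with the high-degree polynomial \eqref{eq: proxy g}: the nonlinear term $(g+g'\pi(1-\pi))^2\bigl(g\pi/(g\pi+1-\pi)-\Lambda\bigr)$ produces many cross terms, and one must carry enough orders that the coefficient of $\theta^k\delta^2$ is extracted correctly for every $k$. This is precisely the step I would delegate to symbolic computation; the role of the proof is to verify that the resulting linear system for $\kappa_\pm$ is uniquely solved by $\kappa_\pm=-1$, after which \eqref{tata} follows by substitution into \eqref{eq: shadow trading boundaries}.
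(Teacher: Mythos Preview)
Your proposal is correct and follows exactly the approach of the paper: the proof is precisely the computational sketch given in the paragraphs immediately preceding the Proposition, and you have faithfully organised those steps (boundary conditions first, then ODE at the endpoints, then ODE in the interior via the $\theta$-parametrisation to extract $\kappa_\pm=-1$). There is no separate proof in the paper beyond that preamble, so your write-up is essentially a cleaner restatement of it.
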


\subsubsection*{The Original Market}

Due to \cite[Theorem 3.1]{gm2023} for small enough bid-ask spread $\varepsilon$ there exist some trading boundaries $\pi_-<\pi_+$ such that the policy $\hat\varphi$ which buys minimally at $\pi_- $ and sells minimally at $\pi_+ $ so to maintain the fraction of wealth $\pi_t$ between $\pi_-$ and $\pi_+$ is optimal. 

The trading boundaries are obtained in terms of the risk-safe ratio $\zeta$. These are of the form $\pi_\pm=\frac{\zeta_\pm}{1+\zeta_\pm}$, where $\zeta_\pm$ are part of
the solution to the free boundary problem for the triple ($W$, $\zeta_\pm$), with a scalar function $W$ satisfying
\begin{align}\label{eq: TKA fbp1}
&\frac{1}{2} \zeta^2 W''(\zeta)+\zeta W'(\zeta)-\frac{\gamma}{(1+\zeta)^2}\left(\Lambda-\frac{\zeta}{1+\zeta}\right)=0,\\\label{initial0 TKA1}
&W(\zeta_-)=0,\qquad W'(\zeta_-)=0,\\\label{terminal0 TKA1}
&W(\zeta_+)=\frac{\varepsilon}{(1+\zeta_+)(1+(1-\varepsilon)\zeta_+)}, \quad W'(\zeta_+)=\frac{\varepsilon(\varepsilon-2(1-\varepsilon)\zeta_+-2)}{(1+\zeta_+)^2(1+(1-\varepsilon)\zeta_+)^2},
\end{align}
which has a unique solution for which $\zeta_-<\zeta_+$. To allow an adaption of the method of Section \ref{sec: shadow LETF}, we use the transformation
\begin{equation}\label{eq: trafo}
\frac{\zeta g(\pi(\zeta))}{1+\zeta g(\pi(\zeta))}=\frac{\zeta}{1+\zeta}-\zeta W(\zeta),\quad \pi(\zeta)=\frac{\zeta}{1+\zeta}
\end{equation}
which translates \eqref{eq: TKA fbp1}--\eqref{terminal0 TKA1} into the following problem for ($g=g(\pi)$, $\pi_\pm$)
\begin{tiny}
\begin{align*}\nonumber
&2 \gamma (\pi (g(\pi )-1)+1)^3 (\pi -\Lambda )+2 \pi (\pi (g(\pi )-1)+1) \left(-g'(\pi )+\pi \left(g'(\pi )+g(\pi )^2-2 g(\pi )+1\right)+g(\pi )^2-1\right)\\&-\pi
^2 \bigg(g''(\pi )-4 g(\pi ) g'(\pi )-2 g'(\pi )+\pi ^2 \left(3 g''(\pi )+4 g'(\pi )^2-2 g'(\pi )+2 g(\pi )
\left(-g''(\pi )+g'(\pi )+3\right)+2 g(\pi )^3-6 g(\pi )^2-2\right)\\&+\pi ^3 \left((g(\pi )-1) g''(\pi )-2 g'(\pi
)^2\right)+\pi \left(-3 g''(\pi )-2 g'(\pi )^2+4 g'(\pi )+g(\pi ) \left(g''(\pi )+2 g'(\pi )-6\right)+2 g(\pi
)^3+4\right)+2 g(\pi )^3-2\bigg)=0
\end{align*}
\end{tiny}
subject to \eqref{eq: shadow constraints}--\eqref{eq: smooth pasting}.

We use again the general Ansatz for $g$ in \eqref{eq: proxy g} and the Ansatz \eqref{eq: shadow trading boundaries} for the trading boundaries, relabelling them as $\pi_\pm$. Since the boundary conditions are the same as in Section \ref{sec: shadow LETF}, we get the exact same coefficients $G_+=G_-$ and $C_+=C_-$.

Since the ODE is different, we have to compute the remaining coefficients again: If $g$ satisfies the ODE \eqref{eq: fbp psi1} at $\pi_\pm$, we get the coefficients $D_\pm$
(lengthy expressions in terms of $\kappa_\pm$, $M_0$, $M_\pm$, but different to the Shadow market) and they satisfy also
\[
D_+-D_-=-\frac{3}{4}(M_+-M_-).
\]
If we insist on $g'(\pi_\pm)=0$ at third order, we get $M_\pm$ as functions of $M_\mp$, $M_0$, and $\kappa_\pm$. If we finally insist on the ODE be satisfied at any $z$ in the no-trade region, then $g$ satisfies the ODE only if 
\[
\kappa_\pm=1.
\]
This sign is opposite to the Shadow market solution. This settles an alternative proof to \cite{gm2023} of the asymptotics of the optimal policy \eqref{as multiplier piminust}. We finally discuss transaction costs:
\begin{remark}\label{rem: LETF ATC}
For the trading boundaries \eqref{as multiplier piminust} (optimal in original market) and \eqref{tata} (optimal in potential shadow market) we get transaction costs
\[
\text{ATC}=\frac{3 \sigma^2}{\gamma}\left(\frac{\gamma\Lambda (\Lambda-1)}{6}\right)^{4/3}\varepsilon^{2/3}+\frac{\sigma^2}{2}\Lambda^2(\Lambda-1)\varepsilon+O(\varepsilon^{4/3}).
\]
Note that this is precisely the asymptotic expansion of the negative tracking difference \eqref{eq1sup}, which may found in \cite[Theorem 3.1, eq.(17)]{gm2023}. 
\end{remark}

\subsubsection{Neuberger's Log Contract}\label{sec: log contract}
In a frictionless market with a risky asset $S$ that is a continuous semimartingale, and cash earning zero interest ($r=0$),\footnote{A small modification for non-zero interest rate
applies.} a variance swap can be perfectly replicated by a combination of
a static hedge in options to replicate the Log contract, and a dynamic hedge with an exposure in a risky position $Y_t$ constant at all times to some $y_\star>0$. Such strategy results in insolvency under proportional costs.

Let us again assume that the ask-price is given by \eqref{eq: evolutionstheorie}, and proportional costs are equal to $\varepsilon\in (0,1)$. Because investment opportunities are constant,
the variance swap is trivial, with vanishing net cashflow upon maturity $T$. Since
\[
\log(S_T/S_0)=\int_0^T \frac{dS_t}{S_t}-\frac{\sigma^2 T}{2},
\]
we consider instead dynamic hedging $y_\star>0$ units of the log-contract that pays $\log(S_T/S_0)$ at time $T>0$.\footnote{In practice this should be done by trading futures on $S$, tied to the log-contract, cf.~\cite{neuberger}. } This is a non-trivial question in itself - as it require to modify the frictionless strategy $Y_t=y_\star$ to remain solvent. This task can be attempted by finding a non-anticipating, finite-variation strategy that minimizes the squared hedging error
\begin{equation}\label{eq: tre hedge}
\frac{1}{T}\mathbb E\left[\int_0^T \langle (Y_t-y_\star) \frac{dS_t}{S_t}\rangle_T\right] =\frac{1}{T}\mathbb E\left[\int_0^T (Y_t-y_\star)^2\sigma^2dt\right]
\end{equation}
while keeping annualized transaction costs 
\begin{equation}\label{eq: trc hedge}
\frac{\varepsilon}{T}\mathbb E\left[\int_0^T Y_t \frac{d\varphi_t^{\downarrow}}{\varphi_t}\right]
\end{equation}
acceptable. With $\gamma>0$, the aversion to hedging error, we minimize 
\begin{equation}\label{eq: obj vs}
\frac{\gamma\sigma^2}{2} \frac{1}{T}\mathbb E\left[\int_0^T (Y_t-y_\star)^2dt\right]+\frac{\varepsilon}{T}\mathbb E\left[\int_0^T Y_t \frac{d\varphi_t^{\downarrow}}{\varphi_t}\right].
\end{equation}

Please note that even when transaction costs are zero, a self-financing strategy that maintains $Y_t$ at a constant value of $y_\star$ may with low probability necessitate excessively large funds by time $T$ to cover rebalancing costs, rendering the strategy potentially insolvent. This issue arises because the fair value of a log-contract is given by
\[
L_t=\log(S_t)-\frac{\sigma^2}{2}(T-t)
\]
thus being a Brownian motion with drift. This challenge is addressed by entering in futures positions on the underlying $S$ rather than trading $S$ itself and linking the futures position to the shorted log-contract, which is also futures traded with $L_t$ as settlement price. This results in changes in the futures positions being immediately offset by the log-contract, providing a perfect hedge (cf.~\cite[p.~78]{neuberger}. In the following we only implicitly adopt these mechanics, by ignoring the solvency issue.

For tractability reasons, we resort to an infinite time horizon, and thus minimize
\begin{equation}\label{eq: obj vsx}
\frac{\gamma\sigma^2}{2} \lim_{T\rightarrow\infty}\frac{1}{T}\mathbb E\left[\int_0^T (Y_t-y_\star)^2dt\right]+\varepsilon\lim_{T\rightarrow\infty}\frac{1}{T}\mathbb E\left[\int_0^T Y_t \frac{d\varphi_t^{\downarrow}}{\varphi_t}\right]
\end{equation}
over all admissible strategies. For time horizons such as five years, and daily hedging, the strategies
typically approximate well those for the finite horizon objective \ref{eq: obj vs}. (cf. ~the robustness checks in \cite[Section 6.2 and Figure 6]{gm2023}). For shorter maturities, such as those aligning with the expiry of European calls or puts, a slight adjustment of the analysis may be necessary. This adjustment yields time-dependent strategies instead of the steady-state solutions discussed earlier. The rigorous treatment of finite-time horizon remains a subject for future research.

We introduce now the hedging exercise. Knowing the perfect target, we conjecture that it is optimal to not engage in trading, as long as $Y_t$ ranges in $[y_-,y_+]$, where $y_-<y_\star<y_+$, and to trade minimally at the boundaries $y_\pm$,
so to keep the risky position $Y$ inside the interval. In other words, $Y$ is a stationary process, realised by the reflected diffusion
\begin{equation}\label{eq: toto}
dY_t=Y_t \left(\frac{d\varphi^\uparrow_t}{\varphi_t}-\frac{d \varphi^\downarrow_t}{\varphi_t}+\frac{dS_t}{S_t}\right)=Y_t\left(dL_t-dU_t+\mu dt+\sigma dB_t\right),
\end{equation}
with $\varphi^\uparrow$ being supported on $Y_t=y_-$, and $\varphi^\downarrow$ being supported on $Y_t=y_+$. In a similar manner as Lemma \ref{lem: existence controllable strategy}
we get
\begin{lemma}\label{lem: existence controllable strategyx}
For $0<y_-<y_+$, there exists an admissible trading strategy ${\hat\varphi}$ such that the risky position $Y_t$
satisfies SDE \eqref{eq: toto}. Moreover, $(Y_t,{\hat\varphi}_t^\uparrow,{\hat\varphi}_t^\downarrow)$ is a reflected diffusion on
the interval $[y_-,y_+]$. In particular, $Y_t$ has stationary density equals
\begin{equation}\label{eq st de 1}
\nu(y):=\frac{\frac{2\mu}{\sigma^2}-1}{y_+^{\frac{\mu}{\sigma^2}-1}-y_-^{\frac{2\mu}{\sigma^2}-1}}y^{\frac{2\mu}{\sigma^2}-2}, \quad y\in[y_-,y_+].
\end{equation}
\end{lemma}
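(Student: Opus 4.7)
The statement is the $Y$-analog of Lemma \ref{lem: existence controllable strategy}, so the plan is to mirror that proof, swapping the risky-safe ratio $\zeta$ for the risky position $Y = \varphi S$. The key reduction is the logarithmic change of variable: setting $\eta_t := \log Y_t$, the SDE \eqref{eq: toto} becomes a reflected Brownian motion with constant drift on $[\eta_-,\eta_+] := [\log y_-,\log y_+]$,
\[
d\eta_t = (\mu - \sigma^2/2)\,dt + \sigma\,dB_t + dL_t - dU_t,
\]
for which existence and uniqueness of the triple $(\eta_t, L_t, U_t)$, with $L,U$ continuous, nondecreasing, and supported on $\{\eta = \eta_\mp\}$ respectively, is the classical Skorokhod problem on a bounded interval (cf.~\cite{tanaka}). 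Setting $Y_t := e^{\eta_t}$ then recovers a reflected geometric Brownian motion on $[y_-,y_+]$.

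From the reflected process I would recover the trading policy by a Dol\'eans--Dade exponential. Put $\varphi_0 := Y_0/S_0$ and $\varphi_t := \varphi_0 \exp(L_t - U_t) > 0$, and define the cumulative trade processes
\[
\varphi_t^{\uparrow} := \int_0^t \varphi_s\,dL_s, \qquad \varphi_t^{\downarrow} := \int_0^t \varphi_s\,dU_s,
\]
which are by construction right-continuous, nondecreasing, non-anticipating, and satisfy $d\varphi_t = \varphi_t(dL_t - dU_t) = d\varphi_t^{\uparrow} - d\varphi_t^{\downarrow}$. Itô's formula then gives $d(\varphi_t S_t) = \varphi_t S_t(\mu\,dt + \sigma\,dB_t + dL_t - dU_t)$, matching \eqref{eq: toto}, so $Y_t = \varphi_t S_t$ follows from equality of initial conditions. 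Defining $X_t$ via \eqref{eq: sf1} completes the admissible pair $(\hat\varphi, X)$.

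For the stationary density, $\eta_t$ is a reflected Brownian motion with drift $\mu - \sigma^2/2$ and volatility $\sigma$, whose stationary density was already derived inside the proof of Lemma \ref{lem: GBM LU}, namely $g(\eta) \propto e^{\alpha \eta}$ with $\alpha = 2\mu/\sigma^2 - 1$. Pushing forward to $y = e^\eta$ (Jacobian $dy/d\eta = y$) yields a density proportional to $y^{\alpha - 1} = y^{2\mu/\sigma^2 - 2}$ on $[y_-,y_+]$, and normalising by $\int_{y_-}^{y_+}\nu(y)\,dy = 1$ produces the stated expression \eqref{eq st de 1}. Alternatively, one may solve directly the adjoint equation $\tfrac{1}{2}(y^2\sigma^2\nu)'' - (\mu y\nu)' = 0$ subject to zero-flux at $y_\pm$.

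The argument is essentially transcription from the already-established reflected-GBM theory, so the main bookkeeping obstacle is the admissibility verification: confirming that the Dol\'eans--Dade construction of $\varphi^{\uparrow}, \varphi^{\downarrow}$ from the local times yields finite-variation, non-anticipating, increasing processes whose difference is $\varphi$ and which, together with $X$ defined via \eqref{eq: sf1}, are self-financing. Positivity of $\varphi_t$ coming from the exponential form is what makes this routine.
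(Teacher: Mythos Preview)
Your proposal is correct and follows exactly the approach the paper intends: the paper gives no explicit proof for this lemma, merely stating that it is obtained ``in a similar manner as Lemma~\ref{lem: existence controllable strategy}'' (which in turn cites \cite[Lemma~B.5]{gm2020}), and your argument is precisely the natural unpacking of that analogy---log-transform to a reflected Brownian motion, invoke the Skorokhod construction, recover $\hat\varphi$ via the Dol\'eans--Dade exponential, and push forward the stationary density. You have supplied considerably more detail than the paper does, but the route is the same.
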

Next, we give a heuristic derivation of the HJB equations, that will allow to identify an optimal strategy. While the heuristics provided here to derive the free boundary problem follow the arguments of \cite{gm2020} for local-mean variance, the resulting equations are different. 

For finite-horizon $T$, we have the objective
\begin{equation}
\max_{\varphi\in \Phi}\mathbb E\left[\gamma\sigma^2\int_0^T \left(y_\star Y_t-\frac{Y_t^2}{2}\right)dt-\varepsilon\int_0^TY_t\frac{d\varphi^\downarrow_t}{\varphi_t} \right].
\end{equation}
Assuming that the value function $V$ depends on $Y_t$ and time $t$ only, the objective's conditional value is
\begin{equation}\textstyle
\textstyle{F^\varphi(t) = \gamma\sigma^2\int_0^t \left(y_\star Y_s-\frac{Y_s^2}{2}\right)ds-\varepsilon\int_0^t Y_s\frac{d\varphi^\downarrow_s}{\varphi_s} 
+ V(t,Y_t)
.}
\end{equation}
Applying the It\^o formula, we get 
\begin{align*}
dF^\varphi(t) &= \gamma\sigma^2\left(y_\star Y_t -\frac{Y_t^2}{2}\right)dt -\varepsilon Y_t\frac{d\varphi^\downarrow_t}{\varphi_t} +V_tdt + \partial_y V dY_t + \frac12 \partial ^2_y V\langle Y_t \rangle_t,
\end{align*}
and thus $F^\varphi$ satisfies the dynamics
\begin{align}
\textstyle{dF^\varphi(t) =}&\textstyle{ 
\left(\gamma\sigma^2y_\star Y_t -\frac{\gamma \sigma^2}{2}Y_t^2
+V_t + \frac{\sigma^2}2 Y_t^2 \partial^2_y V_y + (\mu+r) Y_t \partial_y V
\right)dt}\\
-& Y_t \left(\partial_y V + \varepsilon \right)\frac{d\varphi^\downarrow_t}{\varphi_t} 
+ Y_t \partial_y V \frac{d\varphi^\uparrow_t}{\varphi_t}
+\sigma Y_t \partial_yV dW_t.
\end{align}

The martingale principle of optimal control implies that the process \(F^\varphi(t)\) is a supermartingale for any strategy \(\varphi\) but a martingale for the optimal one. Thus
\begin{align}\label{eq:boucon}
-\varepsilon \leq V_y \leq 0.
\end{align}

Similarly, it follows that
\begin{equation*}
\gamma\sigma^2 y_\star y-\frac{\gamma \sigma^2}{2}y^2
+V_t + \frac{\sigma^2}2y^2 \partial^2_y V + \mu y \partial_y V \leq 0,
\end{equation*}
In order to obtain a steady-state solution, we assume that \(V(t,y) = \kappa (T-t) - \int^y W(z) \, dz\) for a constant \(\kappa\) which represents the average long-run performance. With this assumption, the previous inequalities turn into
\begin{align}
0 \leq W(y) \leq \varepsilon, \\
\gamma \sigma^2 y y_\star -\frac{\gamma \sigma^2}{2}y^2
-\kappa - \frac{\sigma^2}2 y^2 W'(y) - \mu y W(y) \leq 0.
\end{align}
We conjecture now, similar as in \cite{gm2020} that the first is an inequality on some interval \([y_-,y_+]\), which turns into equality at the boundary of the interval, and thus
\begin{align}\label{eq:hjb}
\frac{\sigma^2}2 y^2 W'(y) + \mu y W(y) 
-\gamma\sigma^2 y_\star y+\frac{\gamma \sigma^2}{2}y^2
+\kappa = 0 \qquad \text{for }y \in [y_-,y_+],\\
\label{eq:bou1}
W(y_-) = 0, \quad W(y_+) = \varepsilon.
\end{align}
This problem is a second-order ODE with two free boundaries, and two unknown constants. To identify the latter, we use the smooth pasting conditions
\begin{equation}\label{eq:smooth1}
W'(y_-) = 0, \quad W'(y_+) = 0.
\end{equation}

Let $h(y):=\gamma \sigma^2 y_\star y-\frac{\gamma \sigma^2}{2}y^2$. The initial conditions $W(y_-)=W'(y_-)=0$ identify $\kappa=h(y_-)$,
and thus the ODE becomes
\[
\frac{\sigma^2}2 y^2 W'(y) + \mu y W(y) =h(y)-h(y_-).
\]
The solution of the corresponding initial value problem (which can be found by variation of constants) is given by
\[
W(y)=\frac{2}{\sigma^2 y^\alpha}\int_{y_-}^y (h(u)-h(y_-))u^{\alpha-2}du,\quad\text{where}\quad\alpha=\frac{2\mu}{\sigma^2}.
\]
Insisting on the terminal conditions, $W(y_\pm)=\varepsilon$ and $W'(y_\pm)=0$, we get the following system of non-linear equations for ($y_-,y_+$),
\begin{align}\label{eq: nl1}
\int_{y_-}^{y_+} (h(u)-h(y_-))u^{\alpha-2}du&=\frac{\varepsilon \sigma^2}{2}y_+^{\alpha},\\\label{eq: nl2}
\int_{y_-}^{y_+} (h(u)-h(y_-))u^{\alpha-2}du&=\frac{y_+^{\alpha-1}}{\alpha}(h(y_+)-h(y_-)).
\end{align}

\begin{lemma}\label{lem: y asy}
For sufficiently small $\varepsilon$, the system \eqref{eq: nl1}--\eqref{eq: nl2} has a unique solution $y_\pm$, with asymptotics \eqref{eq: as ypm} satisfying
\begin{equation}\label{eq: series expansions}
y_\pm=y_\star\pm \left(\frac{3 y_\star^2}{4\gamma}\right)^{1/3} \varepsilon^{1/3}-\left(\frac{\gamma y_\star}{6}\right)^{1/3}\frac{\mu}{\gamma \sigma^2}\varepsilon^{2/3}+ O(\varepsilon).
\end{equation}
\end{lemma}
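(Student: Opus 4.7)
The plan is to rescale with $\delta := \varepsilon^{1/3}$, reduce the system to two smooth equations in $(u_-,u_+,\delta)$, apply the implicit function theorem to establish existence and uniqueness, and then match asymptotic expansions to read off the first two coefficients. The essential simplification is that $h$ is a quadratic polynomial with $h'(y_\star) = 0$ and $h''(y_\star) = -\gamma\sigma^2$, which keeps the algebra tractable.

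First I would subtract \eqref{eq: nl1} from \eqref{eq: nl2} to obtain the clean algebraic identity
\[
h(y_+) - h(y_-) = \mu\varepsilon y_+,
\]
then introduce scaled unknowns $y_\pm = y_\star \pm \delta u_\pm$ and the substitution $u = y_\star + \delta w$ inside the integral in \eqref{eq: nl1}. The system becomes $G_1 = G_2 = 0$, where
\begin{align*}
G_1(u_-,u_+,\delta) &:= u_+^2 - u_-^2 + \frac{2\mu y_\star \delta}{\gamma\sigma^2}\bigl(1 + \delta u_+/y_\star\bigr),\\
G_2(u_-,u_+,\delta) &:= \gamma\int_{-u_-}^{u_+}(w^2 - u_-^2)(y_\star+\delta w)^{\alpha-2}\,dw + (y_\star+\delta u_+)^\alpha.
\end{align*}
Both functions are smooth in a neighborhood of $(A,A,0)$ with $A := \bigl(3y_\star^2/(4\gamma)\bigr)^{1/3}$.

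Second, I would apply the implicit function theorem. At $\delta=0$ the condition $G_1=0$ forces $u_+=u_-$; plugging $u_\pm = u$ into $G_2=0$ yields $-\frac{4\gamma u^3}{3}y_\star^{\alpha-2} + y_\star^\alpha = 0$, whose unique positive root is precisely $A$. A direct differentiation (noting that the Leibniz boundary contribution $\gamma(u_+^2-u_-^2)(y_\star+\delta u_+)^{\alpha-2}$ to $\partial G_2/\partial u_+$ vanishes at $u_+ = u_-$) gives
\[
\frac{\partial(G_1,G_2)}{\partial(u_-,u_+)}\bigg|_{(A,A,0)} = \begin{pmatrix}-2A & 2A \\ -4\gamma A^2 y_\star^{\alpha-2} & 0\end{pmatrix},
\]
with determinant $8\gamma A^3 y_\star^{\alpha-2}\neq 0$. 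Hence a unique smooth branch $(u_-(\delta), u_+(\delta))$ exists for small $\delta$, establishing existence and uniqueness in this regime.

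Finally I would extract the expansion by writing $u_\pm(\delta) = A + \beta_\pm \delta + O(\delta^2)$ and matching terms of order $\delta$. The $G_1$ equation yields $\beta_+ - \beta_- = -\mu y_\star/(A\gamma\sigma^2)$. For $G_2$, the perturbation $(\alpha-2)\delta y_\star^{\alpha-3}\int_{-A}^A w(w^2-A^2)\,dw$ coming from expanding $(y_\star+\delta w)^{\alpha-2}$ vanishes by odd symmetry, so only the shift $u_-^2 \mapsto A^2 + 2A\beta_-\delta$ inside the integrand and the term $\alpha\delta A y_\star^{\alpha-1}$ from $(y_\star+\delta u_+)^\alpha$ contribute, producing $\beta_- = \mu y_\star/(2A\gamma\sigma^2)$ and thus $\beta_+ = -\beta_-$. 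Translating back, $y_\pm = y_\star \pm A\delta - \frac{\mu y_\star}{2A\gamma\sigma^2}\delta^2 + O(\delta^3)$, and the identity $y_\star/(2A) = (\gamma y_\star/6)^{1/3}$ (from $(2A)^3 = 6y_\star^2/\gamma$) converts this into \eqref{eq: series expansions}. The main obstacle is the bookkeeping in the $G_2$ expansion; recognizing the odd-symmetry cancellation is what prevents the formula for $\beta_-$ from becoming a long collection of mixed-sign terms.
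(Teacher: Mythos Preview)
Your argument is correct, and it takes a genuinely different route from the paper's own proof. The paper proceeds by a direct formal Ansatz $y_\pm=y_\star\pm A_\pm\varepsilon^{1/3}-B_\pm\varepsilon^{2/3}+O(\varepsilon)$, develops the integral in \eqref{eq: nl1}--\eqref{eq: nl2} as a power series in $\delta=\varepsilon^{1/3}$ (by differentiating under the integral sign and using Leibniz' rule), and matches coefficients order by order; existence and uniqueness are not established separately but are implicit in the matching. Your approach first eliminates the integral by subtracting the two equations to obtain the algebraic relation $h(y_+)-h(y_-)=\mu\varepsilon y_+$, then rescales so that the remaining system $G_1=G_2=0$ is manifestly smooth in $(u_-,u_+,\delta)$ at $(A,A,0)$, and invokes the implicit function theorem. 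This buys you a clean, rigorous existence-and-local-uniqueness statement that the paper's formal matching does not provide, and the odd-symmetry observation $\int_{-A}^{A}w(w^2-A^2)\,dw=0$ makes the first-order expansion of $G_2$ almost immediate. Conversely, the paper's brute-force series matching generalises more transparently to higher-order terms and to non-quadratic $h$, which is precisely the point it emphasises (``The present method has the advantage to be applicable to general functions $h$''); your exploitation of the exact quadratic structure of $h$ would need to be relaxed for that.
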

\begin{proof}
We set $\delta=\varepsilon^{1/3}$. We make the following Ansatz for the asymptotic expansions of the boundaries
\begin{equation}\label{eq: as ypm}
y_\pm=y_\star\pm A_\pm \varepsilon^{1/3}-B_\pm \varepsilon^{2/3}+O(\varepsilon).
\end{equation}
For $\delta\rightarrow 0$, the integral on the right hand side of the equations vanishes. We develop this integral asymptotically for small $\delta$. To this end, observe that
\[
\frac{d}{d\delta} \int_{y_-}^{y_+} (h(u)-h(y_-))u^{\alpha-2}du=y_+^{\alpha-2}(h(y_+)-h(y_-))\frac{d y_+}{d\delta }-h'(y_-)\left(\frac{y_+^{\alpha-1}}{\alpha-1}-\frac{y_-^{\alpha-1}}{\alpha-1}\right)\frac{dy_-}{d\delta}.
\]
Developing this expression as a formal power series in $\delta$ and then integrating once, yields the asymptotic expansion of the left side of the equations \eqref{eq: nl1}--\eqref{eq: nl2}. (Alternatively, once can integrate explicitly. The present method has the advantage to be applicable to general functions $h$, not just the quadratic one here.) Developing also
the right sides, shows that the first equation's leading term is of third order, and the second one has a second order term. To make them vanish, we get the following system of equations,
\begin{align*}
(2 A_--A_+)(A_-+A_+)^2\gamma&=3 y_\star^2,\\
A_-^2&=A_+^2
\end{align*}
which implies
\[
A_+=A_-=\left(\frac{3 y_\star^2}{4\gamma}\right)^{1/3}.
\]
Next, to satisfy \eqref{eq: nl1} at fourth order, we equate the fourth order term to zero, obtaining
\[
B_-=\left(\frac{\gamma y_\star}{6}\right)^{1/3}\frac{\mu}{\gamma \sigma^2},
\]
and to satisfy \eqref{eq: nl2} at third order, we must have $B_+=B_-$. This establishes the asymptotics \eqref{eq: series expansions}.
\end{proof}

We now proceed to the asymptotic method, to provide an

\begin{proof}[Alternative Proof of Lemma \ref{lem: y asy}] 
For \(g=1-W\) the free boundary problem \eqref{eq:hjb}--\eqref{eq:smooth1} takes the form 
\begin{align}\label{eq: deq y}
\frac{1}2 y^2 g''(y) + (2\alpha+1) y g'(y) +2\alpha g(y)
+\gamma(y_\star -y)-2\alpha &=0,\\
g(Y_-)=1,\quad g'(Y_+)&=1-\varepsilon,\\
g'(Y_-)=0,\quad g'(Y_+)&=0,
\end{align}
which is essentially of the form \eqref{eq: ODE2}--\eqref{eq: smooth pasting} except that \(g\) is a function of the risky position \(Y\), instead of \(\pi\). We use an Ansatz similar to \eqref{eq: proxy g} for \(g\) however with \(L_+\neq L_-\) (otherwise the ODE cannot be satisfied at third order in the interior of the interval), but assume, \(G_+=G_-\) (guided by previous two problems). Noting that the variable \(z\) is interpreted as the risky position \(y\), rather than its proportion of wealth:
\begin{align}\label{eq: proxy gy}
g(y)&=\left(1 + C_- (y - y_-)^3 + D_- (y - y_-)^4 + 
E_- (y -y_-)^5)\right) \frac{y - y_+}{y_--y_+}\\&+ \left(1 - \delta^3
+ C_+ (y -y_+)^3 + D_+ (y-y_+)^4 + 
E_+ (y-y_+)^5)\right) \frac{y-y_-}{y_+ -y_-} \\\nonumber&+ 
M_0 (y-y_-)^2 (y-y_+)^2 + 
M_+ (y-y_-) (y-y_+)^3 + M_- (y-y_-)^3 (y-y_+) \\\nonumber&+ 
G_+ (y-y_-) (y-y_+)^2 + G_+ (y-y_+) (y-y_-)^2 + \\\nonumber&+
L_+ (y -y_-) (y_-y_+)^4 + L_- (y-y_-)^4 (y-y_+).
\end{align}
Moreover, we use the Ansatz \eqref{eq: as ypm} for the trading boundaries, assuming knowledge of the first order terms.\footnote{The approximation can be performed without knowing the coefficients \(A_\pm\). Also, one can obtain higher order coefficients with this method.} Being thus spoiled, 
we only need to identify the coefficients \(B_\pm\) in
\begin{equation}\label{eq: series expansionsx}
y_\pm=y_\star\pm \left(\frac{3 y_\star^2}{4\gamma}\right)^{1/3} \varepsilon^{1/3}-B_\pm \varepsilon^{2/3}+ O(\varepsilon).
\end{equation}
By construction, \(g(y_-)=1\) and \(g(y_+)=1-\varepsilon\) at third order. To satisfy \(g'(y_+)=0\) at second order, we must set
\[
C_- = G_+ -\frac{\gamma}{6 y_\star ^2}.
\]
Similarly, to have \(g'(y_-)=0\) at second order, we need \(C_+=G_+\).
Next,
if \(g\) satisfies the ODE \eqref{eq: deq y} at \(y_+\), we get 
\[
D_+=\frac{1}{48}\left(12 M_0+36M_++\frac{6\gamma}{y_\star^3}+\frac{6^{1/3}(B_++B_-)\gamma^{5/3}}{y_\star^{10/3}}\right)
\]
and, if \(g\) satisfies the ODE \eqref{eq: deq y} at \(y_-\), we get 
\[
D_+=D_-+\frac{3}{4}(M_+-M_-).
\]
If we insist on \(g'( y _-)=0\) at third order, we get 
\[
M_+=M_-=M_0 + \frac{6 y_\star ^{1/3}\gamma + 
6^{1/3} (5 B_- - 3 B_+) \gamma^{5/3}}{12 y_\star^{10/3}}.
\]
Next, we insist that the ODE is satisfied at any \(y\) in the interior of the no-trade region \([y_-,y_+]\), that is for any
\[
y=\theta y_-+(1-\theta)y_+,\quad 0<\theta<1.
\]
\(g\) satisfies the ODE at second order, only if 
\[
B_- = \left(\frac{\gamma y_\star}{6}\right)^{1/3}\frac{\mu}{\gamma \sigma^2}.
\]
We set \(B_+=B_-\), anticipating the same asymmetry in the second order expansion, as in other problems\footnote{See the situation for Levered ETFs in the previous section, or local-mean variance criteria \cite{em_1_2024}.}
Thus, in the interior of the interval, \(g\) satisfies the ODE at third order, only if \(E_+=E_-\), and \(E_+, L_-,L_+, C_x=(C_1+C_2)/4\) satisfy a system of four
linear equations; its Jacobian is non-singular with determinant
\[
29160\frac{y_\star^{12}}{\gamma^2}\neq 0,
\]
thus leads to a unique solution. The unique solution of this system ensures that the free boundary problem is satisfied at third order, which finishes this independent
proof of Lemma \ref{lem: y asy}. 
\end{proof}
\subsection{Risk and Returns}\label{sec: risk}
\subsubsection{Power Utility}\label{sec: power utility}
Long-run power utility investors with risk aversion $1\neq \gamma>0$ maximise the equivalent safe rate
\[
\lim\inf_{T\rightarrow \infty}\log \mathbb E[w_T (\varphi)^{1-\gamma}]^{\frac{1}{1-\gamma}}
\]
over all admissible strategies $\varphi\in\Phi$. The shadow price, which yields an optimal trading policy, is of the form \cite[Lemma B.2]{gerhold2012asymptotics}
\[
\widetilde S_t=g(\pi_t) S_t
\]
with
\[
g(\pi(\zeta))=\frac{w(\log(\zeta/\zeta_-))}{\zeta\left(1-w(\log(\zeta/\zeta_-))\right)},
\]
where $w$ satisfies the ODE
\[
\frac{1}{2} w''(y)+(1-\gamma) w'(y) w(y)+\left(\frac{\mu}{\sigma^2}-\frac{1}{2}\right) w'(y)=0.
\]
As 
\[
w(\log(\zeta/\zeta_-))=\frac{g(\pi(\zeta))\zeta }{1+g(\pi(\zeta)\zeta},
\]
$g$ satisfies the free boundary problem
\begin{align*}
&z (z-1)^2 \left((z-1) z g''(z)+2 g'(z) \left(-\pi \gamma +z^2 g'(z)+z-1\right)\right)\\&+(z-1) g(z) \left(2 \pi \gamma
-\left((z-1) z^3 g''(z)\right)-2 z^2 (-\pi \gamma +\gamma +z) g'(z)\right)-2 (\pi -1) \gamma z g(z)^2=0
\end{align*}
subject to \eqref{eq: shadow constraints}--\eqref{eq: smooth pasting}. We are again using the Ansatz
\eqref{eq: proxy g} for a polynomial approximation of $g$, and, inspired by asymptotics in \cite{em_1_2024} make the Ansatz for the trading boundaries
\begin{align}\label{eq: trading kappa boundaries}
\pi_\pm&:=\pi_*\pm\left(\frac{3}{4\gamma}\pi_*^2(\pi_*-1)^2\right)^{1/3}\varepsilon^{1/3}\\\nonumber&\qquad+\kappa_\pm\times \frac{(\pi_*)^2(1-\pi_*)}{6} \left(\frac{6}{\gamma\pi_*(1-\pi_*)}\right)^{2/3}\varepsilon^{2/3}+O(\varepsilon)
\end{align}
with unknown constants $\kappa_\pm$. By construction, $g(\pi_-)=1$ and $g(\pi_+)=1-\varepsilon$ at third order. To satisfy $g'( \pi_+)=0$ at second order, we must set
\[
G_+=G_-
\]
and
\[
C_-=G_--\frac{\gamma}{6 (\pi-1)^2\pi^2}
\]
and for having $g'( \pi_-)=0$ up to second order, we need to insist on $C_+=C_-$.

Next, if $g$ satisfies the ODE \eqref{eq: fbp psi1} at $\pi_\pm$ at second order, we get the coefficients $D_\pm$
(lengthy expressions in terms of $\kappa_\pm$, $M_0$, $M_\pm$) and they satisfy 
\[
D_+-D_-=\frac{3}{4}(M_+-M_-).
\]
If we insist on $g'(\pi_\pm)=0$ at third order, we get $M_\pm$ as functions of
$M_\mp$, $M_0$, and $\kappa_\pm$. 

Note that at the moment, the ODE is satisfied at
$ \pi_\pm$ at second order only. We finally insist on the ODE be satisfied at any $z$ in the interval, that is
\[
z=\theta \pi_-+(1-\theta) \pi_+,\quad 0<\theta<1.
\]
In the interior of the interval, $g$ satisfies the ODE only if 
\[
\kappa_\pm=0.
\]
In other words, the asymptotics of the trading boundaries lack the second order terms, and thus are of the form
\begin{equation}\label{eq: trafosxy}
\pi_\pm=\pi_*\pm\left(\frac{3}{4\gamma}\pi_*^2(\pi_*-1)^2\right)^{1/3}\varepsilon^{1/3} +O(\varepsilon).
\end{equation}
\begin{remark}\label{rem: myopic}
Thus we have an alternative proof of \cite[eq. (2.9) of Theorem 2.2]{gerhold2012asymptotics}. Note that this is different to the two strategies for the myopic investor, where both the shadow market strategy, which is suboptimal,
and the optimal portfolio, have non-vanishing second order terms. In fact, the optimal strategy for the myopic investor in \cite{gm2020} satisfies $\kappa_\pm=\gamma-1$, and the Shadow market strategy
satisfies $\kappa_\pm=-(\gamma-1)$, and is suboptimal (cf. \cite[Theorem 3.3]{em_1_2024}, unless $\gamma=1$. 
\end{remark}

\begin{remark}\label{rem: 3criteria ATC}
For power utility-investors, the trading boundaries are of the form \eqref{eq: trafosxy} implying that the second terms even vanish ($B_+=B_-=0$, cf. ~Remark \ref{rem: second order same sign impact no}). The trading boundaries for local mean-variance investors are similar, with $B_+=B_-$, and also for the associated asymptotic shadow we have the same coefficients of the second order,
albeit of different sign, cf. \cite{em_1_2024}. Therefore, for these three strategies, the average trading costs agree for small bid-ask spread $\varepsilon$ up to third order, and are of the form
\[
\text{ATC}=\frac{3 \sigma^2}{\gamma}\left(\frac{\gamma\pi_*(\pi_*-1)}{6}\right)^{4/3}\varepsilon^{2/3}-\frac{\mu(\gamma-1)}{2\gamma} \pi_*(\pi_*-1) \varepsilon+ O(\varepsilon^{4/3}).
\]
\end{remark}

\subsubsection{Logarithmic Utility}\label{sec: log utility}
A long-run log-utility investor maximizes logarithmic utility of terminal wealth for an infinite time horizon,
\begin{equation}\label{eq: log criterion}
\lim\sup_{T\rightarrow\infty}\frac{1}{T}\mathbb E[\log(w_T(\varphi))]
\end{equation}
over all admissible strategies $\varphi\in\Phi$. For $\mu/\sigma^2<1$ an optimal strategy
is derived in \cite{taksar1988diffusion}, with asymptotics established in \cite{gerhold2012asymptotics}, and, independently
by \cite{gm2020} (the latter discusses, more generally, a local mean-variance objective for any risk-aversion $\gamma\geq 0$, which agrees with log-utility for $\gamma=1$, and includes leverage, in particular $\mu/\sigma^2\geq 1$). In the parameterization of
\cite{em_1_2024}, the shadow price put forward by \cite{gerhold2012asymptotics} is
\[
\widetilde S_t=g(\pi_t) S_t,
\]
where $g, \pi_\pm$ solve 
\begin{align*}
\frac{1}{2}g''(\pi)\pi^2(1-\pi)^2 \sigma^2&=\frac{ \pi \sigma^2(g+g'(\pi)\pi(1-\pi))^2}{1-\pi+\pi g(\pi)}-\mu g(\pi)\\
&-g'(\pi)(\pi(1-\pi)\mu +\pi(1-\pi)^2\sigma^2)
\end{align*} 
subject to \eqref{eq: shadow constraints}--\eqref{eq: smooth pasting}. Note that this equation can also be obtained from the value function, more precisely \cite[Theorem 3.1, eqs. (3.1)-(3.5)]{gm2020} with the transformation \eqref{eq: trafo}.

Using the Ansatz \eqref{eq: proxy g} for a polynomial approximation of $g$, and the Ansatz for the trading boundaries
\begin{align*}
\pi_\pm&:=\pi_*\pm\left(\frac{3}{4}\pi_*^2(\pi_*-1)^2\right)^{1/3}\varepsilon^{1/3}\\&\qquad+\kappa_\pm\times \frac{(\pi_*)^2(1-\pi_*)}{6} \left(\frac{6}{\pi_*(1-\pi_*)}\right)^{2/3}\varepsilon^{2/3}+O(\varepsilon)
\end{align*}
with unknown constants $\kappa_\pm$. Now the same computations as in Section \ref{sec: power utility} deliver $\kappa_\pm=0$, and all constants in the approximation \eqref{eq: proxy g} of $g$ (they are as in Section \ref{sec: power utility}, with the exception that one needs to set the specific value $\gamma=1$.)

\subsubsection{Risk Neutrality}\label{sec: risk neutral}
By \cite{gm2020}, maximizing long-run returns, ignorant to risk, 
\[
\lim_{T\rightarrow \infty}\frac{1}{T}\mathbb E\left[\int_0^T\frac{dw_t}{w_t}\right]
\]
over all admissible strategies $\varphi\in\Phi$, is well posed, because transaction costs take the role of a penalty that typically variance of returns does.

Despite trading costs, for small bid-ask spreads $\varepsilon\in (0,1)$ excessive leverage, around $1/\sqrt{\varepsilon}$, can be obtained. Due to
\cite[Theorem 3.2]{gm2023},
for small enough bid-ask spread $\varepsilon$ there exist some $\pi_-<\pi_+$ such that the policy that minimally buys at $\pi_- $ and minimally sells at $\pi_+ $ to keep the fraction of wealth invested in the risky asset between $\pi_-$ and $\pi_+$ is optimal. The long-run expected return of this policy is given by
\begin{equation}\label{eq: welfare riskneutral}
\lim_{T\rightarrow\infty}
\frac1 T \int_0^T \frac{dw_t}{w_t}
= r + \mu \pi_- 
\end{equation}
and the trading boundaries have the series expansions
\begin{align}\label{as multiplier piminus}
\pi_- =& (1-\kappa)\kappa^{1/2}\left(\frac{\mu}{\sigma^2}\right)^{1/2}\varepsilon^{-1/2} + O(\varepsilon^{1}),\\
\pi_+ =& \kappa^{1/2}\left(\frac{\mu}{\sigma^2}\right)^{1/2}\varepsilon^{-1/2} + O(1),
\end{align}
with $\kappa\approx 0.58$ being the unique zero of
\[
\frac{3}{2}\xi+\log(1-\xi)=0
\]
in the unit interval.

It is important to note that the trading boundaries specified in equation \eqref{as multiplier piminus} exhibit singularity in their expansion, with the leading term experiencing explosive growth. Due to this characteristic, these boundaries do not align with the approach we have employed for other optimization problems. Additionally, according to \cite[Theorem 3.10]{em_1_2024}, there is no shadow price for this particular problem; not even an asymptotically optimal one, as opposed to the situation with positive risk-aversion, where a shadow price exists being optimal up to third order \cite[Theorem 3.1 and Theorem 3.3]{em_1_2024}.

Let $\alpha=\mu/\sigma^2$, and $(W,\zeta_-,\zeta_+)$ with $\zeta_-<\zeta_+$ be the unique solution (according to \cite[Theorem 3.2]{gm2020} of the free boundary problem 
\begin{align}\label{eq: TKA fbp}
&\textstyle{\frac{1}{2}\zeta^2 W''(\zeta)+(\alpha+1)\zeta W'(\zeta)+\alpha W(\zeta)-\frac{\alpha}{(1+\zeta)^2}=0,}\\\label{initial0 TKA}
&\textstyle{W(\zeta_-)=0},\quad
\textstyle{W'(\zeta_-)=0,}\\\label{terminal0 TKA}
&\textstyle{W(\zeta_+)=\frac{\varepsilon}{(1+\zeta_+)(1+(1-\varepsilon)\zeta_+)},}\\
\label{terminal1 TKA}
&\textstyle{W'(\zeta_+)=\frac{\varepsilon(\varepsilon-2(1-\varepsilon)\zeta_+-2)}{(1+\zeta_+)^2(1+(1-\varepsilon)\zeta_+)^2}}.
\end{align}
We regularize the problem with the transformation
\[
u=\frac{-1-\zeta}{\delta}, \quad \delta=\sqrt{\varepsilon},\quad \Psi(\cdot)=W(\zeta(\cdot))
\]
to get the following free boundary problem for the scalar function $\Psi$ and boundaries $u_\pm$
\begin{align}\label{eq: delta2ODE}
&\frac{(1+u\delta)^2}{2}\Psi''(u)+(1+\alpha)\delta(1+\delta u)\Psi'(u)+\alpha \delta^2 \Psi(u)-\frac{\alpha}{u^2}=0,\\\label{cond1}
&\Psi(u_-)=0,\\\label{cond2}
&\Psi'(u_-)=0,\quad\\\label{cond3}
&\Psi(u_+)=\frac{1}{u_+(1-\delta^2)-\delta},\\\label{cond4}
&\Psi'(u_+)=\frac{\delta +2 \left(\delta ^2-1\right) u_+}{ u_+^2
\left(\delta +\left(\delta ^2-1\right) u_+\right)^2}.
\end{align}
Note that, by the chain rule,
\[
\Psi'(u)=(W(\zeta(u))'=W'(\zeta(u))\frac{d\zeta}{d u}=-\delta W'(\zeta(u)),
\]
which implies \eqref{cond4}. We now make the Ansatz for an Approximation of the solution of \eqref{eq: delta2ODE},
\begin{equation}\label{eq: psi}
\Psi(u):=-2 \alpha \log(u)+C_1+C_2 u+\delta(C_3 u^{-\alpha}+C_4)+\delta^2(\frac{C_5}{u}+\frac{C_6}{u^{2\alpha}}).
\end{equation}
We have the following
\begin{lemma}
$\Psi$ defined by \eqref{eq: psi} is a solution of ODE \eqref{eq: delta2ODE} for any $C_i\in\mathbb R$ ($1\leq i \leq 6$).
\end{lemma}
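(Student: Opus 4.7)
The plan is to prove this by direct substitution, exploiting the linearity of the operator
\[
\mathcal{L}\Psi := \frac{(1+u\delta)^2}{2}\,\Psi''(u) + (1+\alpha)\delta(1+\delta u)\,\Psi'(u) + \alpha\delta^2\,\Psi(u).
\]
Since the Ansatz \eqref{eq: psi} is a sum of a fixed ``particular'' piece $-2\alpha\log u$ plus six basis functions $\phi_1,\dots,\phi_6 \in \{1,\,u,\,\delta u^{-\alpha},\,\delta,\,\delta^2/u,\,\delta^2 u^{-2\alpha}\}$ weighted by the free constants $C_i$, the claim reduces to verifying the single identity
\[
\mathcal{L}[-2\alpha\log u] = \frac{\alpha}{u^2}
\]
together with $\mathcal{L}[\phi_i]=0$ for each $i=1,\dots,6$. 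Linearity then gives $\mathcal{L}\Psi - \alpha/u^2 = 0$ for arbitrary $C_i$.

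First I would compute $\Psi'$ and $\Psi''$ by differentiating term by term; each basis element yields a monomial in $u$ (integer or fractional power) multiplied by an explicit power of $\delta$. Substitution into $\mathcal{L}$ and full distribution of $(1+u\delta)^2 = 1+2u\delta+u^2\delta^2$ and $(1+\delta u)$ produces a large but explicit polynomial in $u^{-1}$, $u$, $u^{-\alpha}$, $u^{-2\alpha}$ and $\log u$, each multiplied by powers of $\delta$. The computation is then organised by collecting contributions to each \emph{monomial family} separately: the $u^{-2}$-term (which must reproduce $\alpha/u^2$), the $\log u$ and integer-power families, the $u^{-\alpha-j}$ family for $j=0,1,2$, and the $u^{-2\alpha-j}$ family for $j=0,1,2$.

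The structural reason why this Ansatz is closed under $\mathcal{L}$ is the observation that, under the substitution $v=1+u\delta$, the homogeneous operator is an Euler operator $\tfrac{v^2}{2}\partial_v^2 + (1+\alpha)v\partial_v + \alpha$ with indicial roots $-1$ and $-2\alpha$; these two exponents dictate the appearance of the $u^{-\alpha}$- and $u^{-2\alpha}$-families in \eqref{eq: psi}, while the $1$, $u$ and $1/u$ entries arise from the kernel of the leading-order part $\tfrac{1}{2}\partial_u^2$ together with the particular solution generated by the $\alpha/u^2$ inhomogeneity. Identifying the coefficients one family at a time lets me match the derivative contributions against those from the lower-order terms of $\mathcal{L}$, producing systematic cancellations.

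The main obstacle is purely computational: the routine but lengthy bookkeeping of roughly two dozen distinct monomial coefficients and the polynomial identities in $\alpha$ and $\delta$ they must satisfy. This is ideally handled by a symbolic algebra system such as Mathematica, consistent with the computational methodology the author uses elsewhere in the paper. Once every coefficient has been verified to vanish (with the sole exception of the $u^{-2}$-coefficient, which equals $\alpha$), the lemma follows immediately from linearity of $\mathcal{L}$.
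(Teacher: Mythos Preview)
Your plan contains a genuine error: the basis functions are \emph{not} annihilated by $\mathcal{L}$. For instance, $\mathcal{L}[1]=\alpha\delta^{2}\neq 0$ and $\mathcal{L}[u]=(1+\alpha)\delta+(1+2\alpha)\delta^{2}u\neq 0$; likewise a direct computation gives
\[
\mathcal{L}[-2\alpha\log u]-\frac{\alpha}{u^{2}}=-\frac{2\alpha^{2}\delta}{u}-2\alpha^{2}\delta^{2}\log u-(1+2\alpha)\alpha\delta^{2}\neq 0,
\]
so the identity you propose to verify term by term simply fails. (Your Euler-operator remark is in fact the reason: under $v=1+u\delta$ the exact homogeneous solutions are $(1+u\delta)^{-1}$ and $(1+u\delta)^{-2\alpha}$, not the powers of $u$ appearing in the Ansatz.)

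What the paper actually does is different and weaker than what you are attempting. It expands the \emph{operator} itself as $\mathcal{L}=\mathcal{L}_{0}+\delta\,\mathcal{L}_{1}+\delta^{2}\mathcal{L}_{2}$, with $\mathcal{L}_{0}=\tfrac12\partial_{u}^{2}$, $\mathcal{L}_{1}=u\partial_{u}^{2}+(1+\alpha)\partial_{u}$, and $\mathcal{L}_{2}=\tfrac{u^{2}}{2}\partial_{u}^{2}+(1+\alpha)u\partial_{u}+\alpha$, and then chooses $\Psi_{0}$ to solve $\mathcal{L}_{0}\Psi_{0}=\alpha/u^{2}$, $\Psi_{1}$ in the kernel of $\mathcal{L}_{1}$, and $\Psi_{2}$ in the kernel of $\mathcal{L}_{2}$. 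The word ``solution'' in the lemma is meant in this formal, order-by-order sense---consistent with the sentence immediately preceding the Ansatz, which calls it an \emph{approximation} of the solution---and not as an exact identity $\mathcal{L}\Psi=\alpha/u^{2}$. Your substitution approach cannot succeed because that exact identity is false; you should instead reproduce the paper's decomposition and verify the three separate equations.
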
 
\begin{proof}
We expand \eqref{eq: delta2ODE} formally in powers of $\delta$ (only up to second order appear as coefficients), giving us the equation associated with $\delta^2$,
\begin{equation}\label{eq: initial}
\frac{u^2}{2}\Psi_2''(u)+(1+\alpha)u\Psi_2'(u)+\alpha\Psi_2(u)=0,
\end{equation}
the linear term associated with $\delta$
\begin{equation}\label{eq: mid}
u\Psi_1''(u)+(1+\alpha)\Psi_1'(u)=0,
\end{equation}
and the constant term
\begin{equation}
\frac{1}{2}\Psi_0''(u)-\frac{\alpha}{u^2}=0.
\end{equation}
The last one gives the general solution $\Psi_0(u)=-2 \alpha \log(u)+C_1+C_2 u$,
and equation \eqref{eq: mid} gives $\Psi_1(u)=C_3 u^{-\alpha}+C_4$. Finally,
equation \eqref{eq: initial} gives
\[
\Psi_2(u)=\frac{C_5}{u}+\frac{C_6}{u^{2\alpha}}.
\]
\end{proof}
Assume that
\[
\pi_\pm=\frac{1}{A_\pm \delta}+B_\pm+C_\pm \delta +O(\delta^2),
\]
then
\begin{equation}\label{eq zeta expansion}
\zeta_\pm=-1-A_\pm \delta+A_\pm^2 (B_\pm-1) \delta^2+A_\pm^2(C_\pm-A_\pm(B_\pm-1)^2)\delta^3+O(\delta^4)
\end{equation}
and thus
\[
u\pm=A_\pm-A_\pm^2(B_\pm-1)\delta-A_\pm^2(C_\pm-A_\pm(B_\pm-1)^2)\delta^2+O(\delta^3).
\]
We first make sure that the zero order term $\Psi_0$ satisfies, by default, the left boundary conditions \eqref{cond1} and \eqref{cond2}. This implies that
\[
\Psi_0(u)=-2\alpha \log(u/u_-)+2\alpha \frac{u-u_-}{u_-}.
\]

For condition \eqref{cond2} to be satisfied at first order, we get $C_3=0$. But then, to get \eqref{cond1} satisfied
at first order, we need $C_4=0$. We further set $C_5=C_6=0$ to satisfy the same conditions at second order. We thus have
\[
\Psi(u_-)=\Psi'(u_-)=O(\delta^3).
\]
Next, to have \eqref{cond3} and \eqref{cond4} satisfied at zero order, $A_\pm$ must satisfy the system of equations
\begin{equation}\label{first order eq1}
2\alpha\left(\log(A_-/A_+)-\frac{A_--A_+}{A_-}\right)-\frac{1}{A_+^2}=0,\quad
\alpha\left(\frac{1}{A_+}-\frac{1}{A_-}\right)-\frac{1}{A_+^3}=0.
\end{equation}
Note that these equations recover the ones in \cite[D.5]{gm2020}. By \cite[Lemma D.1.]{gm2020} the unique solution $(A_-,A_+)$ of the system \eqref{first order eq1} is
\begin{equation}\label{eq: first order proxies}
A_-=\frac{\kappa^{-1/2}}{1-\kappa}\sqrt{\frac{\sigma^2}{\mu}},\quad A_+=\kappa^{-1/2}\sqrt{\frac{\sigma^2}{\mu}},
\end{equation}
where $\kappa\approx 0.5828$ is the unique solution of \eqref{first order eq1}.

To have \eqref{cond3} and \eqref{cond4} satisfied at first order, we equate the respective coefficients to zero,
\begin{align}
\frac{-2 A_+^2 (B_+-1)-1}{A_+^3}-\frac{2 \alpha 
({A_-}-{A_+}) ({A_-} ({B_-}-1)-{A_+}
{B_+}+{A_+})}{{A_-}}&=0,\\
\frac{3}{{A_+}^4}+\frac{6 ({B_+}-1)}{{A_+}^2}+2 \alpha
({B_-}-{B-})&=0,
\end{align}
which yields the constants\footnote{The paper \cite{gm2020} states that $B_\pm=1$ which is a typo, having no implications for their paper.}
\begin{align}
B_-&=1+\frac{2 {A_-}-3 {A_+}}{2 \alpha
{A_+}^3 ({A_-}-{A_+})^2+2 {A_-} {A_+} (4
{A_+}-3 {A_-})},\\
B_+&=1+\frac{{A_-} (3
{A_-}-4 {A_+})}{2 \alpha {A_+}^4
({A_-}-{A_+})^2+2 {A_-} {A_+}^2 (4 {A_+}-3
{A_-})}.
\end{align}
We must ultimately verify that the approximation of the value function and trading strategy has been carried out to a satisfactory order. To achieve this, we begin by determining the order of average transaction costs. Since the trading boundaries experience explosive behavior as the bid-ask spread approaches zero, we anticipate a corresponding escalation in transaction costs. This expectation is corroborated by the following analysis: 
\begin{lemma}
The risk-neutral investor faces average transaction costs 
\begin{equation}\label{eq: ATCx}
\text{ATC}=\frac{\sigma^2 (1-\kappa) \kappa^{3/2}}{2\varepsilon^{1/2}}+O(1).
\end{equation}
\end{lemma}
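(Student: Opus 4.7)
The plan is to apply the exact ATC formula \eqref{eq: ATC GBM exact} of Proposition~\ref{prop: const prop minimal trades} to the optimal trading boundaries \eqref{as multiplier piminus}, and to extract the leading $\varepsilon^{-1/2}$ behaviour by an asymptotic expansion in $\delta:=\sqrt{\varepsilon}$. First I would check that \eqref{eq: ATC GBM exact} remains applicable in the \emph{leveraged} regime $\pi_\pm>1$ (equivalently $\zeta_\pm<-1$), which is new compared to earlier sections where $\pi_\pm\in(0,1)$. Since the risky-safe ratio $\zeta_t=\pi_t/(1-\pi_t)$ is then a reflected geometric Brownian motion on an interval of negative numbers, I would replay the argument of Lemma~\ref{lem: GBM LU} using $\eta=\log|\zeta|$ in place of $\log\zeta$; the long-run selling local time takes the same expression, with $\xi_+/\xi_-$ replaced by $|\zeta_-|/|\zeta_+|$, and hence \eqref{eq: ATC GBM exact} carries over verbatim.

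Second, I would substitute $\pi_\pm=P_\pm/\delta+B_\pm+O(\delta)$, with $P_+=\kappa^{1/2}(\mu/\sigma^2)^{1/2}$ and $P_-=(1-\kappa)P_+$ (read off \eqref{eq: first order proxies} via $\pi_\pm=\zeta_\pm/(1+\zeta_\pm)$), and expand each factor in \eqref{eq: ATC GBM exact}. The first factor simplifies at leading order to $-\varepsilon\pi_+^2\sim -\kappa\mu/\sigma^2$, because $\varepsilon\pi_+\to 0$ but $\varepsilon\pi_+^2\to P_+^2$. For the second factor, the leading $O(\delta^{-2})$ contributions in the numerator and denominator of $\pi_+(1-\pi_-)/(\pi_-(1-\pi_+))$ both equal $-P_+P_-/\delta^2$ and cancel; the next order yields
\[
\frac{\pi_+(1-\pi_-)}{\pi_-(1-\pi_+)}=1-\delta\!\left(\frac{1}{P_-}-\frac{1}{P_+}\right)+O(\delta^2)=1-\frac{\kappa\,\delta}{(1-\kappa)P_+}+O(\delta^2).
\]
With $\alpha=\mu/\sigma^2$ and the expansion $(1-x)^{1-2\alpha}-1=-(1-2\alpha)x+O(x^2)$, the prefactor $(1-2\alpha)$ cancels identically between numerator and denominator of the second factor, reducing it to an $O(\delta^{-1})$ contribution. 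Multiplying the two factors then yields the announced $\varepsilon^{-1/2}$ divergence of ATC, and an $O(1)$ remainder.

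The hard part will be handling the double cancellation that occurs at leading order: the ratio in the second factor tends to $1$ \emph{and} the prefactor $1-2\alpha$ drops out, so the divergent $\varepsilon^{-1/2}$ term emerges only after correctly capturing the subleading $O(\delta)$ correction of the ratio. This requires using only $P_\pm$ in the boundary expansion, as one verifies that the constants $B_\pm$ do not contribute to the leading order. As an independent consistency check, combining the wealth dynamics \eqref{eq w diff} with the optimal welfare \eqref{eq: welfare riskneutral} yields the identity $\text{ATC}=\mu\bar\pi-\mu\pi_-+O(1)$, where $\bar\pi$ is the stationary mean of $\pi_t$ under the density \eqref{eq st de 2}; computing $\bar\pi$ directly then provides an alternative route to the same leading coefficient.
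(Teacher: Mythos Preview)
Your approach is correct and essentially coincides with the paper's. The paper invokes the exact ATC formula from \cite[Lemma~C.3]{gm2020} in the $\zeta$-parameterisation and expands using the $\zeta_\pm$ series~\eqref{eq zeta expansion}, obtaining $\text{ATC}=\sigma^2/\bigl(2(A_--A_+)A_+^2\sqrt{\varepsilon}\bigr)+O(1)$ before inserting~\eqref{eq: first order proxies}; you use the equivalent formula~\eqref{eq: ATC GBM exact} in the $\pi$-parameterisation and expand in $\delta=\sqrt{\varepsilon}$, which amounts to the same computation after the change of variable $\pi=\zeta/(1+\zeta)$ (equivalently $P_\pm=1/A_\pm$). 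Your explicit justification that~\eqref{eq: ATC GBM exact} remains valid in the leveraged regime $\zeta_\pm<-1$ via the density~\eqref{eq st de 2} is a point the paper passes over, and the consistency check through~\eqref{eq: welfare riskneutral} is a useful addition not present in the paper's proof.
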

\begin{proof}
The annualized mean trading costs for the strategy
with maximum returns is given by
\begin{equation}\label{eq: ATC appendix}
\avtrco:=\varepsilon\lim_{T\rightarrow\infty}\frac{1}{T}\int_0^T \pi_t \frac{d\varphi_t^\downarrow}{\varphi_t}=\frac{\sigma^2(2\alpha-1)}{2} \left(\frac{G({\zeta_+}) {\zeta_+}}{1-\left(\frac{{\zeta_-}}{{\zeta_+}}\right)^{2\gamma\pi_*-1}}\right),
\end{equation}
(cf.~By \cite[Lemma C.3]{gm2020}). Expanding this formula asymptotically, by using the expression \eqref{eq zeta expansion} for the free boundaries $\zeta_\pm$, leads to
\begin{equation}\label{eq: ATC0xx}
\text{ATC}=\frac{\sigma^2}{2(A_--A_+)A_+^2\varepsilon^{1/2}}+O(1)
\end{equation}
and thus, in view of \eqref{eq: first order proxies} yields \eqref{eq: ATCx}.
\end{proof}
Since by \eqref{eq: welfare riskneutral} and \eqref{as multiplier piminus} the equivalent safe rate adheres to the asymptotics
\[
r+\mu \pi_-= r+\frac{\mu^{3/2}\kappa^{1/2} (1-\kappa)}{\sigma\varepsilon^{1/2}}+O(1),
\]
we infer that the impact of transaction costs on welfare is of a similar order. Notably, this situation contrasts with many other problems, where the size of the no-trade region is strictly of larger order than the impact of transaction costs on welfare (as discussed in Section \ref{sec: Intro} and referenced therein). As the coefficients $B_\pm$ only contribute to the zeroth order, as seen in \eqref{eq: ATC0xx}, our approximation is deemed satisfactory. This suggests that the essential effects of transaction costs on welfare are adequately captured by the simplified analysis.

\section{Discussion}\label{sec: Discussion}

In this paper, we provide asymptotic methods for continuous-time portfolio choice under proportional transaction costs.

The first results pertain to trading frequency and incurred cost. While for long-run or infinite horizon optimisation problems, it is typically optimal to trade such that several portfolio statistics follow a reflected diffusion, these strategies cannot be realized in real-world trading setup: The trading is of infinite activity and thus requires not only continuous monitoring, but also the execution of trades of infinitesimal size. This inspires us to develop realistic, discrete trades. In terms of the proportion of wealth in the risky asset, if the trading boundaries satisfy
\[
\vert\pi_+-\pi_-\vert=O(\varepsilon^{1/3}),
\]
bulk trades to some target $\Lambda\in (\pi_-,\pi_+)$ incur average transaction costs that are twice than the optimal strategy's (Proposition \ref{prop: bulky trades} vs. Proposition \ref{prop: const prop minimal trades}), rendering them strictly suboptimal. For maximal trades (Section \ref{sec: maximal trades} and Remark \ref{rem: moderate trades}), where reasonable trade sizes $\Delta \pi$ must be of order $\varepsilon^{1/3}$ (no larger than the size of the no-trade region), the Trading frequency (TF) is of order $\varepsilon^{-2/3}$. A moment's reflection
on Proposition \ref{prop: bulky trades} reveals that annualized transaction costs are, precise at first order, the product of relative bid-ask spread, trading frequency and trade size:
\begin{equation}\label{eq: scaling law}
\text{ATC}=\varepsilon\times \text{TF}\times \Delta \pi.
\end{equation}
The same formula can be inferred from small trades (Proposition \ref{prop: micro trades}), where trading frequency is much higher (order $\varepsilon^{-1}$)
but annualized transaction costs are smaller (of the order $\varepsilon^{2/3}$). This process can be iterated, yielding the scaling law \eqref{eq: scaling law} (cf.~\cite{golub} and the references therein), for trade sizes of order $\varepsilon^{\alpha/3}$ and associated trading frequency $\varepsilon^{-\frac{1}{3}(\alpha+1)}$, where $\alpha=1,2,3,\dots$, resulting in the same order of average trading costs, namely $\varepsilon^{2/3}$. This scaling law provides a complimentary answer to the question in \cite{rogers} concerning the effect of proportional transaction costs being of second order, as annualized transaction costs of nearly optimal strategies are precisely of this order. For $\alpha\uparrow\infty$, we recover the control limit policy, which is continuous. Our result is applicable to any control limit policy, irrespective of the objective in use, whereas \cite{rogers} reaches his conclusion only for maximizing power utility from consumption in an investment problem on an infinite horizon.

Section \ref{sec: transaction costs} offers a novel insight into the question, why ATC may agree up to third (!) order, despite varying, problem-specific, second order terms in the trading boundaries (Remark \ref{rem: second order same sign impact no}): If the second order coefficients of lower and upper trading boundaries $\pi_\pm$ agree, they do not contribute to the asymptotics \eqref{eq: ATC GBM asy} of ATC, as only their difference, here zero, gives a non-zero contribution. A notable example is the (candidate) shadow price strategy in \eqref{tata} vs the optimal strategy \eqref{as multiplier piminust} for levered ETFs: The respective no-trade regions are small shifts - in opposed directions- of a symmetric region with boundaries
\[
\pi_\pm=\Lambda\pm\left(\frac{3}{4\gamma}\Lambda^2(\Lambda-1)^2\right)^{1/3}\varepsilon^{1/3}+O(\varepsilon).
\]
Note that the latter trading boundaries do not correspond to any optimisation problem mentioned here or in the literature (they lack second order terms.) However, in the analogous situation for local-mean variance criteria \cite{em_1_2024}, a symmetric no-trade region around the Merton fraction $\Lambda=\frac{\mu}{\gamma\sigma^2}$ identifies the optimal strategy for a power-utility investor \cite{gerhold.al.11} (see Remark \ref{rem: 3criteria ATC}). In this case, three different problems whose trading policy disagrees at second order, enjoy the same equivalent safe rate up to third order.

In Sections \ref{sec: tracking} and \ref{sec: risk}, we introduce a novel approximation approach for addressing free boundary problems. Employing a universal polynomial Ansatz for the value function enables the resolution of free boundary problems associated with power utility of terminal wealth, Levered ETFs objectives, and the computation of a solvent hedge for the Log contract. Remarkably, our solutions exhibit precision up to the third order, with trading boundaries precision reaching the second order. This high level of precision becomes particularly relevant, given that transaction costs' primary contribution to the objective is of the second order. These costs are solely influenced by the first-order coefficients of the trading boundaries. Consequently, the polynomial method excels in providing precision beyond what is strictly necessary.

\theendnotes
\appendix

\end{document}